\begin{document}

\mainmatter  

\title{Privacy-preserving Stochastic Gradual Learning}

\titlerunning{Privacy-preserving Stochastic Gradual Learning}

%
%
\author{Bo Han$^{1}$%
\and Ivor W. Tsang$^{1}$ \and Xiaokui Xiao$^2$ \and Ling Chen$^1$,\\ Sai-Fu Fung$^3$ \and Celina P. Yu$^4$}
\authorrunning{Bo Han et al.}

\institute{$^1$Center for Artificial Intelligence, University of Technology Sydney, Australia\\
$^2$Department of Computer Science, National University of Singapore, Singapore\\
$^3$Department of Applied Social Sciences, City University of Hong Kong, Hong Kong\\
$^4$Global Business College of Australia, Australia\\
\mailsa\\
\mailsb\\
\mailsc\\
}

%
%
\maketitle

\begin{abstract}
It is challenging for stochastic optimizations to handle large-scale sensitive data safely. Recently, Duchi et al. proposed private sampling strategy to solve privacy leakage in stochastic optimizations. However, this strategy leads to robustness degeneration, since this strategy is equal to the noise injection on each gradient, which adversely affects updates of the primal variable. To address this challenge, we introduce a robust stochastic optimization under the framework of local privacy, which is called Privacy-pREserving StochasTIc Gradual lEarning (PRESTIGE). PRESTIGE bridges private updates of the primal variable (by private sampling) with the gradual curriculum learning (CL). Specifically, the noise injection leads to the issue of label noise, but the robust learning process of CL can combat with label noise. Thus, PRESTIGE yields ``private but robust'' updates of the primal variable on the private curriculum, namely an reordered label sequence provided by CL. In theory, we reveal the convergence rate and maximum complexity of PRESTIGE. Empirical results on six datasets show that, PRESTIGE achieves a good tradeoff between privacy preservation and robustness over baselines.
\end{abstract}

\section{Introduction}
{L}{earning} from large-scale sensitive data stems from 2004~\cite{quere2004mining}, and still keeps vibrant~\cite{hao2011privacy,wu2014data}. However, the direct use of learning algorithms on such data will lead to the issues of ``computational burden''~\cite{bottou2016optimization} and ``privacy leakage'' \cite{xiong2005privacy}. Unfortunately, large-scale sensitive data are ubiquitous in the real world, such as electronic health record~\cite{gelfand2012privacy}, mobile app information~\cite{barbara1999mobile,zhu2014mobile}, and genome-wide association database~\cite{jiang2004cluster,jiang2005interactive,wang2016learning}.

Recently, researchers leverage stochastic optimization to handle large-scale data, because of its low computational cost~\cite{bottou2016optimization}. First, it does not require to compute the full gradient in each iteration. This merit reduces time costs greatly. Second, in each iteration, it processes either a single point~\cite{cotter2011better} or a tiny batch of points~\cite{mitliagkas2013memory}. However, on large-scale sensitive data, the direct use of stochastic optimization is obviously unsafe, which may lead to the issue of ``privacy leakage''.

Specifically, large-scale sensitive data itself (i.e., electronic health record) carries too much private information (i.e., disease history or diet habit) related to data providers. Even if we leverage the ``safe harbor'' method~\cite{benitez2010evaluating,malin2011never} to de-identify~\footnote{The de-identification involves removing many identifiers, such as names, dates, genders, drive license and social security numbers.} these sensitive data before using it, the re-identification is still possible~\cite{gelfand2012privacy}. Therefore, the output of stochastic optimization will carry implicit information delivered from large-scale sensitive data. Based on these information, hackers can perform reverse engineering, namely reasoning from output to input, to re-identify data providers and expose their privacy~\cite{fung2007anonymizing,yu2014scalable}.

To handle large-scale sensitive data, it is reasonable to propose the privacy-preserving stochastic optimizations \cite{song2013stochastic} under the global privacy \cite{dwork2006calibrating,xiao2011differential,zhu2017differentially}. Motivated by the fact that the local privacy is more stringent than the global privacy~\cite{warner1965randomized}, Duchi et al. designed a private sampling strategy to preserve the privacy of stochastic optimizations under local privacy \cite{duchi2013local-focs}. However, the issue of ``robustness degeneration'' arises subsequently, since the technique of private sampling is essentially equal to the noise injection on each gradient, which adversely affects updates of the primal variable. This issue is very common for privacy-preserving algorithms \cite{kantarcioglu2004privacy,liu2006random,dwork2009differential,chaudhuri2011differentially,wang2016learning}.

To address this challenge, we introduce a robust stochastic optimization  under the framework of local privacy \cite{aggarwal2008privacy}, which is called Privacy-pREserving StochasTIc Gradual lEarning (PRESTIGE). PRESTIGE naturally bridges private updates of the primal variable (by private sampling) with the gradual learning process of curriculum learning (CL) \cite{bengio2009curriculum}. Our inspiration springs from the learning process of CL, namely learning from ``easy'' tasks to ``complex'' tasks, which is often used for training robust models \cite{bengio2009curriculum,kumar2010self}. Specifically, the noise injection leads to the issue of label noise, but the robust learning process of CL can combat with label noise \cite{chen2015webly,han2018progressive}. Therefore, PRESTIGE achieves ``private but robust'' updates of the primal variable on the private curriculum, which is an reordered label sequence provided by CL from beneficial labels to adverse labels \footnote{Beneficial label is sufficiently reliable for the update of the primal variable correctly. Meanwhile, adverse label is unreliable or even noisy for the update of the primal variable correctly.}. To sum up, in the first epoch, PRESTIGE ensures the update of the primal variable on beneficial labels, which creates a robust model from the outset. In subsequent epochs, updates of the primal variable occur on adverse labels gradually until convergence. \textbf{Our contributions} are summarized as follows.
\begin{enumerate}
\item We introduce a ``private but robust'' stochastic optimization called PRESTIGE. This is the first work to solve robustness degeneration in privacy-preserving stochastic optimizations by curriculum learning.
\item We define the private curriculum to realize PRESIGE, and reveal the convergence rate and the maximum complexity (KWIK bound \cite{li2008knows}) of PRESTIGE.
\item We conduct comprehensive experiments on UCI and real-world sensitive datasets. Empirical results show that PRESTIGE achieves a good tradeoff between privacy preservation and robustness over other baselines.
\end{enumerate}

\section{Related Work}\label{RelatedWork}
%
Our study deals with privacy-preserving online/stochastic optimizations for large-scale sensitive data. For example, Jain et al. proposed the differentially-private online learning \cite{jain2012differentially}. Song et al. proposed the private stochastic gradient descent (SGD) by the gradient perturbation, and improved this work with the mini-batch trick \cite{song2013stochastic}. Wu et al. bridged differentially-private SGD with a practical RDBMS system \cite{wu2017bolt}. Nevertheless, all their works are based on global differential privacy \cite{hu2015differential}. Since local differential privacy is more stringent than global differential privacy, Duchi et al. proposed a private SGD under local privacy \cite{duchi2013local-focs}. However, their method does not consider ``robustness degeneration''. Our PRESTIGE aims to address ``privacy leakage'' and ``robustness degeneration'' simultaneously under local privacy.

Our work is also related to curriculum learning (CL) and self-paced learning. Bengio et al. provided a learning paradigm called curriculum learning \cite{bengio2009curriculum}, and Kumar et al. presented the similar learning regime named self-paced learning \cite{kumar2010self}. The idea shared by these two studies is to learn easier tasks first, and gradually learn more difficult tasks to result in a robust model. However, CL has never been applied into any privacy-preserving algorithms due to two challenges. First, CL is a high-level idea without the specific formalization, it is still unknown how to realize the private CL algorithms under privacy constraints. Moreover, existing CL algorithms do not have any formal termination criteria, which needs more iterations to converge. PRESTIGE can be viewed as the first work to realize the private CL under local privacy.

Therefore, in the following paper, we try to address three important questions: 1) why curriculum learning can solve the issue of robustness degeneration? 2) how to design the private curriculum to realize PRESTIGE? 3) how to provide theoretical guarantees for PRESTIGE?

\section{PRESTIGE}\label{SelfPacedSL}
In this section, we begin with preliminary notations and definitions. Then, we briefly present why ``robustness degeneration'' will happen, and how to overcome this nontrivial issue by ``private curriculum''. Lastly, we provide the theoretical analysis.

\subsection{Preliminary Notations and Definitions}\label{Preliminary}
Throughout the paper, let $\mathcal{D} = \{\mathbf{x}_i, y_i\}_{i=1}^n $ be the training data, where $\mathbf{x}_i \in \mathbb{R}^d$ denotes the $i$th instance and $y_i \in \{-1,+1\}$ denotes its binary label. A typical classification model is represented as:
\begin{equation}\label{obj-classification}
\min_{\mathbf{w}} F(\mathbf{w})= \min_{\mathbf{w}}\frac{1}{n}\sum_{i=1}^{n} f_i(\mathbf{w}),
\end{equation}
where $\mathbf{w} \in \mathbb{B}_d(R)$ is the primal variable. $\mathbb{B}_d(R)$ denotes the $d$ dimensional Euclidean ball of radius $R$. Specifically, $f_i(\mathbf{w}) = \rho_\lambda(\mathbf{w}) + r(\mathbf{w};\{\mathbf{x}_i,y_i\})$ where $\lambda$ is the regularization parameter, $\rho_\lambda(\mathbf{w})$ is the regularizer, and $r(\mathbf{w};\{\mathbf{x}_i,y_i\})$ is a loss function. Note that, robustness can be evaluated by the accuracy of classification model.

Intuitively, the $\epsilon$-differentially globally private algorithm preserves the privacy on the level of the whole dataset (Definition~\ref{G-Privacy}). However, the $\epsilon$-differentially locally private algorithm preserves the privacy on the level of each instance (Definition~\ref{L-Privacy}). Therefore, local privacy is more stringent than the global privacy~\cite{warner1965randomized}.

\begin{definition}\label{G-Privacy}
(Global Privacy, Definition 2 in~\cite{dwork2006}). For a privacy parameter $\epsilon \geq 0$, a algorithm $\mathcal{M}$ is $\epsilon$-differentially globally private, if

\begin{equation}
\Pr(\mathcal{M}(\mathbf{X}) \in S) \leq \exp(\epsilon)\Pr(\mathcal{M}(\mathbf{X}') \in S),
\end{equation}
for all measurable subsets $S$ of the range of $\mathcal{M}$, and for all datasets $\mathbf{X}$, $\mathbf{X}'$ differing by a single entry (their Hamming distance equals to $1$).
\end{definition}

\begin{definition}\label{L-Privacy}
(Local Privacy, Definition 1 in~\cite{duchi2018minimax}) For a privacy parameter $\epsilon \geq 0$, a algorithm $\mathcal{M}$ is $\epsilon$-differentially locally private, if

\begin{equation}
\sup_{S \in \sigma(\mathcal{Z})}\sup_{x,x' \in \mathcal{X}}\frac{\Pr(\mathbf{z}_i \in S|\mathbf{x}_i = x)}{\Pr(\mathbf{z}_i \in S|\mathbf{x}_i = x')}\leq \exp(\epsilon),
\end{equation}
where $\mathbf{z}_i = \mathcal{M}(\mathbf{x}_i)$ for $\forall i$. Namely, the output $\{\mathbf{z}_i\}_{i=1}^{n}$ is an $\epsilon$-differentially locally private view of input $\{\mathbf{x}_i\}_{i=1}^{n}$. $\sigma(\mathcal{Z})$ denotes an appropriate $\sigma$-field on $\mathcal{Z}$.
\end{definition}

\subsection{Challenge}\label{Privacy-Preservation}
\begin{figure}
\begin{center}
\centerline{\includegraphics[width=0.55\textwidth]{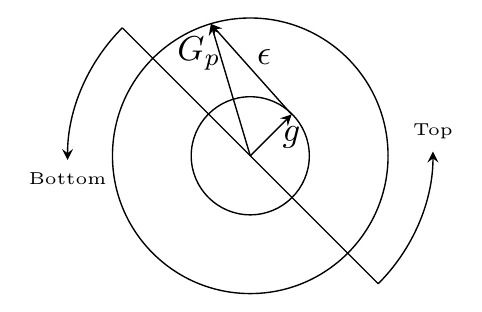}}
\caption{Private sampling perturbs and rescales gradient $g$ into $G_p$, which equals to noise $\epsilon$ injection.}
\label{Privacy-Figure}
\end{center}
\end{figure}

\begin{proposition}\label{LocallyPrivateSampling}
(Private Sampling for Stochastic Gradient) Given a stochastic gradient vector $g$ with $\lVert g\rVert_2 \leq L$, where $L$ is Lipschitz parameter, rescale $g$ into $\tilde{g} = L\frac{g}{\lVert g\rVert_2}$ with probability $\frac{1}{2} + \frac{\lVert g\rVert_2}{2L}$ and $\tilde{g} = -L\frac{g}{\lVert g\rVert_2}$ with probability $\frac{1}{2} - \frac{\lVert g\rVert_2}{2L}$. After sampling $Q \sim$ \text{Bernoulli}($\pi_{\varepsilon_s} = \frac{e^{\varepsilon_s}}{e^{\varepsilon_s} + 1}$), we sample $G_p$ (privatized stochastic gradient vector) as follows:
\begin{equation}\label{private-gradient}
G_p \sim \left\{
\begin{array}{cc}
\text{Unif}(\langle g_p,\tilde{g}\rangle > 0), &Q = 1\\
\text{Unif}(\langle g_p,\tilde{g}\rangle \leq 0), &Q = 0\\
\end{array},
\right.
\end{equation}
where $g_p \in \mathbb{R}^d$ and $\lVert g_p\rVert_2 = B$.
\end{proposition}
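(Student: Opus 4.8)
The plan is to verify the two defining properties that make $G_p$ a useful privatized gradient: that it is $\varepsilon_s$-differentially locally private in the sense of Definition~\ref{L-Privacy}, and that it remains an unbiased estimate of $g$ up to a known positive scaling, so that the rescaled output can replace $g$ in the stochastic update. I would decompose the mechanism into its two stages, the randomized rounding $g \mapsto \tilde{g}$ onto the sphere of radius $L$ and the halfspace sampling $\tilde{g} \mapsto G_p$ driven by the Bernoulli coin $Q$, analyse each stage separately, and combine them through the tower rule.

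First I would check that the rounding step introduces no bias. A direct computation with the two stated probabilities gives
\begin{equation}
\mathbb{E}[\tilde{g} \mid g] = \left(\tfrac{1}{2} + \tfrac{\lVert g\rVert_2}{2L}\right)\tfrac{Lg}{\lVert g\rVert_2} - \left(\tfrac{1}{2} - \tfrac{\lVert g\rVert_2}{2L}\right)\tfrac{Lg}{\lVert g\rVert_2} = g,
\end{equation}
so projecting onto $\{\pm L g/\lVert g\rVert_2\}$ preserves the mean. Next, conditioning on $\tilde{g}$ (whose norm is exactly $L$), I would exploit the rotational symmetry of the uniform law on each halfsphere: the mean of a uniform draw from $\{z : \lVert z\rVert_2 = B,\ \langle z,\tilde{g}\rangle > 0\}$ points along $\tilde{g}/\lVert\tilde{g}\rVert_2$ with a dimension-dependent magnitude $m$, and the mean over the opposite halfspace is its negation. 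Weighting by $Q \sim \text{Bernoulli}(\pi_{\varepsilon_s})$ then yields
\begin{equation}
\mathbb{E}[G_p \mid \tilde{g}] = (2\pi_{\varepsilon_s} - 1)\,\frac{m}{L}\,\tilde{g} = \frac{e^{\varepsilon_s}-1}{e^{\varepsilon_s}+1}\cdot\frac{m}{L}\,\tilde{g},
\end{equation}
and the tower rule gives $\mathbb{E}[G_p] = c\,g$ with $c = \frac{e^{\varepsilon_s}-1}{e^{\varepsilon_s}+1}\cdot\frac{m}{L} > 0$, so dividing by $c$ recovers an unbiased private gradient.

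For the privacy guarantee I would write the conditional density of $G_p$ given the input at an arbitrary point $z$ on the sphere of radius $B$. Because the two halfspaces $\langle\cdot,\tilde{g}\rangle > 0$ and $\langle\cdot,\tilde{g}\rangle \leq 0$ carry exactly equal uniform mass, this density takes only two values, proportional to $\pi_{\varepsilon_s}$ and to $1-\pi_{\varepsilon_s}$ respectively, regardless of where $\tilde{g}$ lies. Hence for any two inputs $x,x'$ the likelihood ratio in Definition~\ref{L-Privacy} is maximised when $z$ falls in opposite halfspaces, and the bound is attained exactly at $\pi_{\varepsilon_s}/(1-\pi_{\varepsilon_s}) = e^{\varepsilon_s}$.

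The main obstacle is the geometric symmetry step rather than the algebra: one must justify that the uniform mean over a spherical cap is a positive scalar multiple of the cap's axis and that opposing caps share identical measure, which requires an explicit integration over the sphere and expresses the constant $m$ as a ratio of Gamma functions in the dimension $d$. A minor but necessary point is to handle the degenerate case $g = 0$, where $\tilde{g}$ and the halfspaces are undefined, separately, and to note that the measure-zero boundary $\langle\cdot,\tilde{g}\rangle = 0$ affects neither the density ratio nor the conditional mean.
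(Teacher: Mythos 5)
Your conclusion is right, but you prove more than the paper does here, and by a different route. The paper's proof of Proposition~\ref{LocallyPrivateSampling} establishes only the $\varepsilon_s$-local-privacy claim, and does so by an event-level case analysis: it pushes the input through the rounding step via the chain rule, writes $\Pr(G_p \in \mathbb{T} \mid g \in \mathbb{T}) = (0.5+k)\frac{e^{\varepsilon_s}}{e^{\varepsilon_s}+1} + (0.5-k)\frac{1}{e^{\varepsilon_s}+1}$ with $k = \lVert g\rVert_2/(2L)$, and maximizes the resulting ratio over $k \leq 0.5$ in four Top/Bottom cases. Your unbiasedness computation (rounding preserves the mean, the halfsphere mean is a Gamma-function multiple of $\tilde{g}$, tower rule gives $\mathbb{E}[G_p] = c\,g$) is not part of the paper's proof of this proposition at all; it is exactly the content of the separate Lemma~\ref{scalarbound}, where $B$ is chosen so that $c=1$, and it is used only later in Theorem~\ref{Convergence}. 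Your privacy argument via the pointwise two-valued density is genuinely different from, and in one respect stronger than, the paper's: a pointwise density-ratio bound controls $\Pr(G_p \in S \mid \cdot)$ for \emph{every} measurable $S$ at once, whereas the paper only ever checks the two halfspace events and never argues that these are the worst case over the $\sigma$-field in Definition~\ref{L-Privacy}.

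One step of your privacy argument needs tightening, and it is precisely where the paper's chain-rule computation does its work. The density of $G_p$ given the actual input $g$ is \emph{not} two values proportional to $\pi_{\varepsilon_s}$ and $1-\pi_{\varepsilon_s}$ unless $\lVert g\rVert_2 = L$: because the rounding $g \mapsto \tilde{g}$ is itself random with data-dependent weights, the density on the halfsphere aligned with $g$ is the mixture $(0.5+k)\pi_{\varepsilon_s} + (0.5-k)(1-\pi_{\varepsilon_s})$ (times $2/A$, with $A$ the sphere area), and the complementary combination on the other halfsphere. What you wrote is the density given $\tilde{g}$. The repair is one line: any such mixture lies in the interval $\big[(1-\pi_{\varepsilon_s})\cdot 2/A,\ \pi_{\varepsilon_s}\cdot 2/A\big]$, so the ratio of densities under any two inputs is still at most $\pi_{\varepsilon_s}/(1-\pi_{\varepsilon_s}) = e^{\varepsilon_s}$, with equality approached exactly in the paper's worst case $k = k' = 0.5$. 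With that fix, together with your (correct) remarks on the degenerate input $g=0$ and the measure-zero boundary, your route is sound.
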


\begin{proof}
Here, we prove whether private sampling belongs to the non-interactive strategy of $\varepsilon_s$-local differential privacy. Note that, we leverage Figure \ref{Privacy-Figure} to assist our proof. By the first principle, we should start from the adaptive definition of $\varepsilon_s$-local differential privacy, and we employ the non-interactive strategy:

\begin{equation}
\sup_{R}\sup_{r,r'}\frac{\Pr(G_P\in R|g\in r)}{\Pr(G_P\in R|g\in r')}\leq e^{\varepsilon_s}.
\end{equation}
Note that, $R, r, r'$ represents the ``Top'' or ``Bottom'' area (``Top'' or ``Bottom'' is abbreviated as ``$\mathbb{T}$'' or ``$\mathbb{B}$''), where $G_p$ or $g$ locates. Therefore, this proof can be conducted from four cases as follows: 1) $\frac{\Pr(G_P\in \mathbb{T}|g\in \mathbb{T})}{\Pr(G_P\in \mathbb{T}|g\in \mathbb{B})}$; 2) $\frac{\Pr(G_P\in \mathbb{B}|g\in \mathbb{B})}{\Pr(G_P\in \mathbb{B}|g\in \mathbb{T})}$; 3) $\frac{\Pr(G_P\in \mathbb{T}|g\in \mathbb{B})}{\Pr(G_P\in \mathbb{T}|g\in \mathbb{T})}$; 4) $\frac{\Pr(G_P\in \mathbb{B}|g\in \mathbb{T})}{\Pr(G_P\in \mathbb{B}|g\in \mathbb{B})}$. Assume that $\frac{\lVert g\rVert_2}{2L} = k$, due $\lVert g\rVert_2 \leq L$, we have $k \leq 0.5$.

If $g\in \mathbb{T}$, then $\Pr(\tilde{g}\in \mathbb{T}|g\in \mathbb{T}) = 0.5 + \frac{\lVert g\rVert_2}{2L}$, meanwhile, if $\tilde{g}\in \mathbb{T}$, then $\Pr(G_p\in \mathbb{T}|\tilde{g}\in \mathbb{T}) = \frac{e^{\varepsilon_s}}{e^{\varepsilon_s}+1}$. However, if $g\in \mathbb{T}$, then $\Pr(\tilde{g}\in \mathbb{B}|g\in \mathbb{T}) = 0.5 - \frac{\lVert g\rVert_2}{2L}$, meanwhile, if $\tilde{g}\in \mathbb{B}$, then $\Pr(G_p\in \mathbb{T}|\tilde{g}\in \mathbb{B}) = \frac{1}{e^{\varepsilon_s}+1}$. According to the chain rule, we have $\Pr(G_p\in \mathbb{T}|g\in \mathbb{T}) = (0.5 + k)\frac{e^{\varepsilon_s}}{e^{\varepsilon_s}+1} + (0.5 - k)\frac{1}{e^{\varepsilon_s}+1}$.

If $g\in \mathbb{B}$, then $\Pr(\tilde{g}\in \mathbb{T}|g\in \mathbb{B}) = 0.5 - \frac{\lVert g\rVert_2}{2L}$, meanwhile, if $\tilde{g}\in \mathbb{T}$, then $\Pr(G_p\in \mathbb{T}|\tilde{g}\in \mathbb{T}) = \frac{e^{\varepsilon_s}}{e^{\varepsilon_s}+1}$. However, if $g\in \mathbb{B}$, then $\Pr(\tilde{g}\in \mathbb{B}|g\in \mathbb{B}) = 0.5 + \frac{\lVert g\rVert_2}{2L}$, meanwhile, if $\tilde{g}\in \mathbb{B}$, then $\Pr(G_p\in \mathbb{T}|\tilde{g}\in \mathbb{B}) = \frac{1}{e^{\varepsilon_s}+1}$. According to the chain rule, we have $\Pr(G_p\in \mathbb{T}|g\in \mathbb{B}) = (0.5 - k)\frac{e^{\varepsilon_s}}{e^{\varepsilon_s}+1} + (0.5 + k)\frac{1}{e^{\varepsilon_s}+1}$.

Therefore, we have:
\begin{equation}\label{case1}
\begin{split}
&\quad\sup\sup\frac{\Pr(G_p\in \mathbb{T}|g\in \mathbb{T})}{\Pr(G_p\in \mathbb{T}|g\in \mathbb{B})}\\
&= \frac{0.5 + \frac{e^{\varepsilon_s}-1}{e^{\varepsilon_s}+1}k}{0.5 + \frac{1-e^{\varepsilon_s}}{e^{\varepsilon_s}+1}k} \leq \frac{0.5 + \frac{e^{\varepsilon_s}-1}{e^{\varepsilon_s}+1}0.5}{0.5 + \frac{1-e^{\varepsilon_s}}{e^{\varepsilon_s}+1}0.5} = \frac{1 + \frac{e^{\varepsilon_s}-1}{e^{\varepsilon_s}+1}}{1 + \frac{1-e^{\varepsilon_s}}{e^{\varepsilon_s}+1}} = e^{\varepsilon_s}.
\end{split}
\end{equation}
Therefore, the first case has been proved. By symmetry, we prove the second case in the following.

If $g\in \mathbb{B}$, then $\Pr(\tilde{g}\in \mathbb{T}|g\in \mathbb{B}) = 0.5 - \frac{\lVert g\rVert_2}{2L}$, meanwhile, if $\tilde{g}\in \mathbb{T}$, then $\Pr(G_p\in \mathbb{B}|\tilde{g}\in \mathbb{T}) = \frac{1}{e^{\varepsilon_s}+1}$. However, if $g\in \mathbb{B}$, then $\Pr(\tilde{g}\in \mathbb{B}|g\in \mathbb{B}) = 0.5 + \frac{\lVert g\rVert_2}{2L}$, meanwhile, if $\tilde{g}\in \mathbb{B}$, then $\Pr(G_p\in \mathbb{B}|\tilde{g}\in \mathbb{B}) = \frac{e^{\varepsilon_s}}{e^{\varepsilon_s}+1}$. According to the chain rule, we have $\Pr(G_p\in \mathbb{B}|g\in \mathbb{B}) = (0.5 - k)\frac{1}{e^{\varepsilon_s}+1} + (0.5 + k)\frac{e^{\varepsilon_s}}{e^{\varepsilon_s}+1}$.

If $g\in \mathbb{T}$, then $\Pr(\tilde{g}\in \mathbb{T}|g\in \mathbb{T}) = 0.5 + \frac{\lVert g\rVert_2}{2L}$, meanwhile, if $\tilde{g}\in \mathbb{T}$, then $\Pr(G_p\in \mathbb{B}|\tilde{g}\in \mathbb{T}) = \frac{1}{e^{\varepsilon_s}+1}$. However, if $g\in \mathbb{T}$, then $\Pr(\tilde{g}\in \mathbb{B}|g\in \mathbb{T}) = 0.5 - \frac{\lVert g\rVert_2}{2L}$, meanwhile, if $\tilde{g}\in \mathbb{B}$, then $\Pr(G_p\in \mathbb{B}|\tilde{g}\in \mathbb{B}) = \frac{e^{\varepsilon_s}}{e^{\varepsilon_s}+1}$. According to the chain rule, we have $\Pr(G_p\in \mathbb{B}|g\in \mathbb{T}) = (0.5 + k)\frac{1}{e^{\varepsilon_s}+1} + (0.5 - k)\frac{e^{\varepsilon_s}}{e^{\varepsilon_s}+1}$.

Therefore, we have:
\begin{equation}\label{case2}
\begin{split}
&\quad\sup\sup\frac{\Pr(G_p\in \mathbb{B}|g\in \mathbb{B})}{\Pr(G_p\in \mathbb{B}|g\in \mathbb{T})}\\
&= \frac{0.5 + \frac{e^{\varepsilon_s}-1}{e^{\varepsilon_s}+1}k}{0.5 + \frac{1-e^{\varepsilon_s}}{e^{\varepsilon_s}+1}k} \leq \frac{0.5 + \frac{e^{\varepsilon_s}-1}{e^{\varepsilon_s}+1}0.5}{0.5 + \frac{1-e^{\varepsilon_s}}{e^{\varepsilon_s}+1}0.5} = \frac{1 + \frac{e^{\varepsilon_s}-1}{e^{\varepsilon_s}+1}}{1 + \frac{1-e^{\varepsilon_s}}{e^{\varepsilon_s}+1}} = e^{\varepsilon_s}.
\end{split}
\end{equation}

Thus, the second case has been proved. Furthermore, according to the above analysis and Eq. \eqref{case1} and \eqref{case2}, we can easily prove the case 3 and case 4, namely:
\begin{equation}
\quad\sup\sup\frac{\Pr(G_p\in \mathbb{T}|g\in \mathbb{B})}{\Pr(G_p\in \mathbb{T}|g\in \mathbb{T})} = \frac{1 + \frac{1-e^{\varepsilon_s}}{e^{\varepsilon_s}+1}}{1 + \frac{e^{\varepsilon_s}-1}{e^{\varepsilon_s}+1}} = \frac{1}{e^{\varepsilon_s}} \leq e^{\varepsilon_s}.
\end{equation}

\begin{equation}
\quad\sup\sup\frac{\Pr(G_p\in \mathbb{B}|g\in \mathbb{T})}{\Pr(G_p\in \mathbb{B}|g\in \mathbb{B})} = \frac{1 + \frac{1-e^{\varepsilon_s}}{e^{\varepsilon_s}+1}}{1 + \frac{e^{\varepsilon_s}-1}{e^{\varepsilon_s}+1}} = \frac{1}{e^{\varepsilon_s}} \leq e^{\varepsilon_s}.
\end{equation}

Then the proof completes. To sum up, we prove that private sampling for stochastic gradient belongs to the non-interactive strategy of $\varepsilon_s$-local differential privacy.
\end{proof}

Recently, Duchi et al. invented a novel strategy called private sampling \cite{duchi2013local-focs}, which imposes stringent local privacy for any $d$-dimensional mean-estimation problem \cite{duchi2013local-arXiv}. Therefore, this strategy can be easily adopted into stochastic gradient to preserve the privacy of stochastic optimizations (Proposition~\ref{LocallyPrivateSampling}). However, this strategy inevitably leads to ``robustness degeneration'', since this strategy is equal to the noise injection on each gradient (Figure~\ref{Privacy-Figure}), which adversely affects updates of the primal variable $\mathbf{w}$. Moreover, its unrestricted noise injection may lead to the slow convergence or even the divergence \cite{duchi2013local-focs}. Thus, these issues motivate us to explore a mechanism to ensure ``private but robust'' updates of the primal variable $\mathbf{w}$ with the convergence.
\subsection{Solution}\label{Sec:PRESTIGE}
\begin{table*}
\caption{Comparison of different approaches. DJW denotes Duchi-Jordan-Wainwright's approach using private sampling \cite{duchi2013local-focs}.}
\label{Comparison}
\begin{center}
\scalebox{1}{
\begin{tabular}{c|c|c|c}
\hline
Methods & Iterative Training & Local Privacy & Learning Process \\ \hline
SGD & Gradient-based & No &Random \\ \hline
DJW & Gradient-based & Yes &Random \\ \hline
CL & Heuristic & No & ``Easy'' to ``Complex'' \\ \hline
PRESTIGE & Gradient-based & Yes & ``Beneficial" to ``Adverse'' \\ \hline
\end{tabular}}
\end{center}
\end{table*}

\subsubsection{Main Idea}

Our idea is motivated by curriculum learning (CL), which learns easier tasks first, and learns more difficult tasks gradually to ensure a robust model. This strategy resembles training an infant through to adulthood. Namely, knowledge is ordered to allow for gradual learning. Infants are provided with easy knowledge first. As they grow and are able to handle more complex concepts, more difficult knowledge is provided. Based on the above inspiration, we introduce a ``private but robust'' stochastic optimization called ``Privacy-pREserving StochasTIc Gradual lEarning'' (PRESTIGE) for large-scale sensitive data. PRESTIGE incorporates private updates of the primal variable $\mathbf{w}$ (by private sampling) with the gradual learning of CL.

Specifically, the noise injection leads to the issue of label noise \cite{natarajan2013learning}. Meanwhile, CL provides the ordered learning to learn from easy tasks first then to hard tasks until convergence, which can combat with label noise \cite{chen2015webly}. Therefore, through the robust learning paradigm of CL, PRESTIGE aims to yield ``private but robust'' updates of the primal variable $\mathbf{w}$ on the curriculum (Definition~\ref{Curriculum}), which is an reordered label sequence from ``beneficial'' labels to ``adverse'' labels (Figure~\ref{curriculum-learning}). Table~\ref{Comparison} shows the key comparison of different learning approaches, and PRESTIGE integrates all benefits simultaneously. Noted that, CL is only a high-level idea without any specific formalization. We illustrate our PRESTIGE using a label noise setting as follows.

\begin{figure}[!tp]
\begin{center}
\centerline{\includegraphics[width=0.85\textwidth]{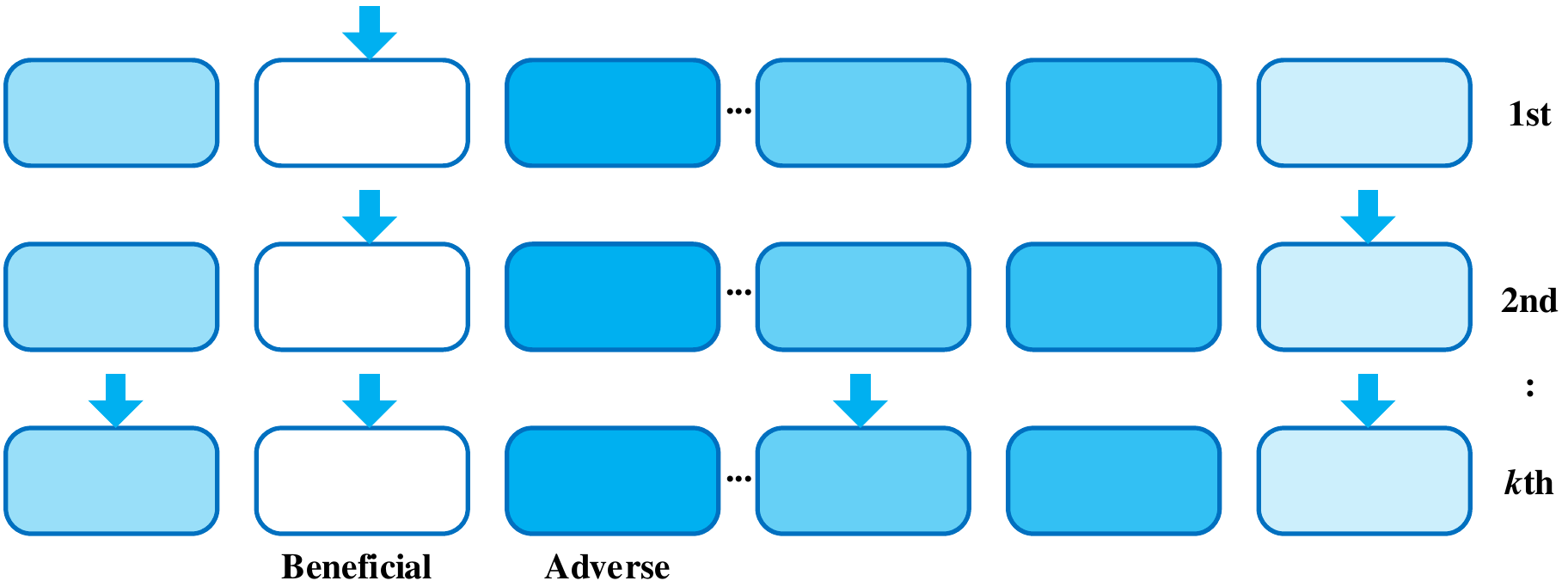}}
\caption{PRESTIGE yields robust updates of the primal variable {\bf w} on an reordered label sequence, namely from beneficial labels (light block) to adverse labels (dark block). The darker block denotes the curriculum with more adverse labels, and vice versa. In each iteration, arrows specify the feasible areas for updates of the primal variable {\bf w}.}
\label{curriculum-learning}
\end{center}
\end{figure}

\subsubsection{Illustrated Example} Given that private sampling adversely affects the update of current $\mathbf{w}$ (hyperplane). In the top panel of Figure~\ref{RobustLosses}, instance $\mathbf{x}_{A}$ (i.e., data point ``A'') should be with label $y_{A} = +1$, which corresponds to its predicted label value (+1) (predicted label value = $\left\{
\begin{array}{cc}
+1  &\mathcal{C}_{\mathbf{w}}(\mathbf{x}) \geq 0\\
-1  &\mathcal{C}_{\mathbf{w}}(\mathbf{x}) < 0\\
\end{array}
\right.
$, where $\mathcal{C}_\mathbf{w}$ denotes the current classifier). Therefore, the label $y_{A}$ of instance $\mathbf{x}_{A}$ can be regarded as a ``beneficial'' label. Namely, beneficial label is sufficiently reliable for the update of the primal variable correctly. Conversely, instances $\mathbf{x}_{B}$, $\mathbf{x}_{C}$ and $\mathbf{x}_{D}$ (i.e., data points ``B'', ``C'' and ``D'') should be with labels \mbox{$y_{B} = -1, y_{C} = -1, y_{D} = +1$}. These labels are the opposite of their predicted label value. Therefore, the label $y_{B}$ of instance $\mathbf{x}_{B}$, the label $y_{C}$ of instance $\mathbf{x}_{C}$ and the label $y_{D}$ of instance $\mathbf{x}_{D}$ can be regarded as ``adverse'' labels. Namely, adverse label is unreliable or even noisy for the update of the primal variable correctly. Among them, ``B'' is farther than ``C'' and ``D'' to the hyperplane, which means that ``B'' is more unreliable. More importantly, these labels would negatively affect the update of the primal variable. To remedy this negative effect, PRESTIGE attempts to leverage the robust learning regime of CL. First, we define the curriculum of PRESTIGE.

\begin{figure}[!tp]
\begin{center}
\centerline{\includegraphics[width=0.55\textwidth]{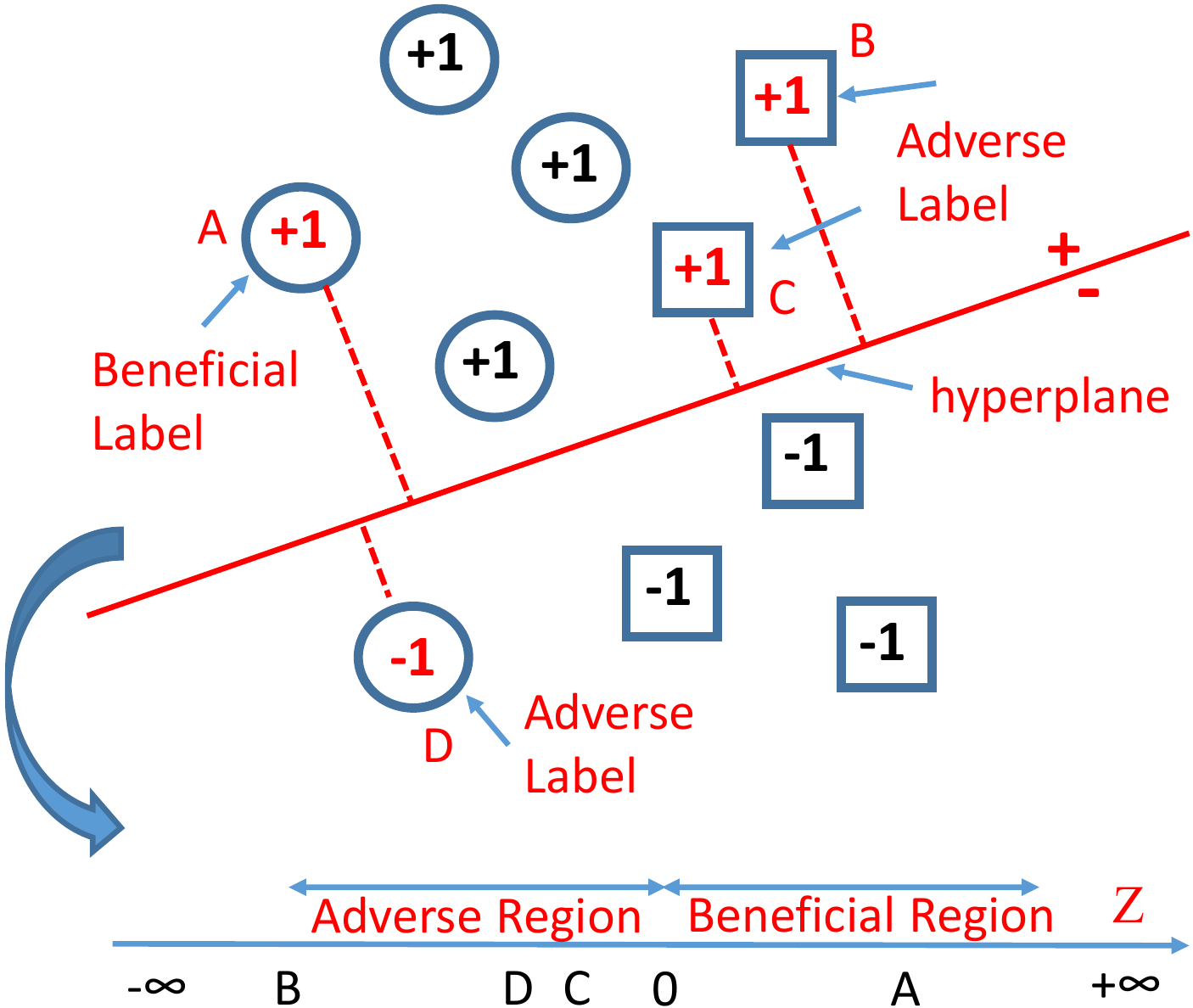}}
\caption{\textbf{Top Panel:} Circles denote \emph{real positive} instances such as ``A'' and ``D''. Squares represent \emph{real negative} instances such as ``B'' and ``C''; however, both ``B'' and ``C'' are \emph{erroneously} viewed as positive class, and ``D'' is \emph{erroneously} viewed as negative class. This creates three adverse labels. According to their distance to the hyperplane, the label of ``A'' is beneficial; while the labels of ``B'', ``C'' and ``D'' are adverse. \textbf{Bottom Panel:} ``A'' is located in the beneficial region defined by $z \geq 0$; while ``B'', ``C'' and ``D'' are located in the adverse region defined by $z \leq 0$.}
\label{RobustLosses}
\end{center}
\end{figure}

\begin{definition}\label{Curriculum}
(Curriculum). For training data $\mathcal{D} = \{\mathbf{x}_i, y_i\}_{i=1}^n $, if $\mathcal{C}_{\mathbf{w}}$ denotes the current classifier, then the curriculum $z$ in PRESTIGE can be calculated as the product of the label $y$ and the predicted label of an instance $\mathbf{x}$, namely $z = y\mathcal{C}_{\mathbf{w}}(\mathbf{x})$.
\end{definition}

\begin{remark}
In PRESTIGE, $z_i > z_j$ represents that $x_i$ is more reliable than $x_j$, which further means that $x_i$ should be learned earlier than $x_j$. In conclusion, PRESTIGE learns from ``beneficial'' labels to ``adverse'' labels.
\end{remark}

According to Definition \ref{Curriculum}, given beneficial data $\{\mathbf{x}_A, y_A\}$, adverse data $\{\mathbf{x}_B, y_B\}$, $\{\mathbf{x}_C, y_C\}$ and $\{\mathbf{x}_D, y_D\}$ in the top panel of \mbox{Figure~\ref{RobustLosses}}, we have the curriculum $z_A = y_{A}\mathcal{C}_{\mathbf{w}}(\mathbf{x}_{A}) > 0$, the curriculums $z_B = y_{B}\mathcal{C}_{\mathbf{w}}(\mathbf{x}_{B}) < z_D = y_{D}\mathcal{C}_{\mathbf{w}}(\mathbf{x}_{D}) < z_C = y_{C}\mathcal{C}_{\mathbf{w}}(\mathbf{x}_{C}) < 0$. Therefore, in the bottom panel of Figure~\ref{RobustLosses}, the curriculum $z = 0$ can be employed to separate the axis into the beneficial region ($z \geq 0$) and the adverse region ($z \leq 0$) respectively. Furthermore, the update region for PRESTIGE ($z \geq D_{th}$) is controlled by a dynamic threshold $D_{th}$. Note that, $D_{th}$ should be initialized larger than~$1$ in the axis and the curriculum $z \geq D_{th}$. Thus, the curriculum $z > 1$ includes beneficial labels by the max-margin principle~\cite{shalev2014understanding}. Through the robust learning process of CL, in the initial epoch, the update of the primal variable $\mathbf{w}$ is limited to ``beneficial" labels ($z \geq D_{th} > 1$) to establish a robust model. In the following epochs, we gradually reduce the dynamic threshold $D_{th}$. Correspondingly, updates occur from ``beneficial'' to ``adverse'' labels incrementally until convergence.

\subsubsection{Design of Private Curriculum} The former example explains how PRESTIGE yields robust updates of the primal variable {\bf w} on the curriculum provided by CL. However, this curriculum may break the constraint of local privacy, which motivates us to propose the private curriculum (Definition~\ref{P-Curriculum}) for PRESTIGE. Reasons are explained in Remark of Definition~\ref{P-Curriculum}. Note that, Proposition~\ref{randomizedresponse} serves for Definition~\ref{P-Curriculum}.
\begin{proposition}\label{randomizedresponse} Given the design matrix $P_{c}$ of Warner's model~\cite{warner1965randomized}. If randomized response is leveraged to preserve $\varepsilon_r$-differentially local privacy, the design matrix $P_{c}$ should be:
\begin{equation}\label{Pc}
P_c =
\begin{pmatrix}
\frac{e^{\varepsilon_r}}{e^{\varepsilon_r}+1}&\frac{1}{e^{\varepsilon_r}+1}\\
\frac{1}{e^{\varepsilon_r}+1}&\frac{e^{\varepsilon_r}}{e^{\varepsilon_r}+1}\\
\end{pmatrix}.
\end{equation}
\end{proposition}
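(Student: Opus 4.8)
The plan is to show that the stated entries of $P_c$ are exactly the truth-telling and flip probabilities forced by demanding that Warner's binary randomized-response mechanism be \emph{tightly} $\varepsilon_r$-locally private. First I would fix notation: write the unknown design matrix as a $2\times 2$ column-stochastic matrix whose $(i,j)$ entry is the conditional probability $\Pr(Z=i \mid X=j)$ of reporting value $i$ when the true private bit is $j$. By the symmetry of Warner's model this matrix is determined by a single truth-telling probability $p$ on the diagonal and a flip probability $q=1-p$ off the diagonal, so the whole problem reduces to pinning down $p$. This parallels the earlier proof of Proposition~\ref{LocallyPrivateSampling}, where the Bernoulli parameter $\pi_{\varepsilon_s}=\tfrac{e^{\varepsilon_s}}{e^{\varepsilon_s}+1}$ played the same calibrating role.

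Next I would instantiate the local-privacy constraint of Definition~\ref{L-Privacy} for this mechanism. Because the output alphabet $\mathcal{Z}=\{0,1\}$ is finite, the supremum over measurable sets $S\in\sigma(\mathcal{Z})$ is attained on singletons, so it suffices to bound the likelihood ratios obtained by varying the reported value and the pair of inputs. Computing these ratios yields $p/q$ and $q/p$, each appearing twice by the diagonal symmetry of the matrix, so the worst-case privacy exponent of the mechanism is $\max\{p/q,\,q/p\}$. In the meaningful regime $p>q$ (truthful reporting more likely than flipping) this maximum equals $p/q$.

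Then I would impose that the mechanism meet the privacy budget with equality, $p/q=e^{\varepsilon_r}$, which is what singles out a canonical design matrix rather than merely one feasible choice. Solving this together with the normalization $p+q=1$ gives the two-equation linear system whose unique solution is $p=\tfrac{e^{\varepsilon_r}}{e^{\varepsilon_r}+1}$ and $q=\tfrac{1}{e^{\varepsilon_r}+1}$, which is precisely the matrix in \eqref{Pc}.

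The main obstacle I anticipate is not the algebra but the justification step: carefully arguing that the supremum over all $\sigma(\mathcal{Z})$-measurable sets in Definition~\ref{L-Privacy} reduces to the four singleton-output ratios, and that enforcing tightness is the correct normalization that forces these particular entries (as opposed to any matrix with $p/q\le e^{\varepsilon_r}$, which would only satisfy the inequality loosely). Once that reduction and the tightness convention are in place, the remaining computation is the elementary $2\times 2$ solve above.
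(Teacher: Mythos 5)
Your proposal is correct and follows the same overall route as the paper: use Warner's symmetry to collapse the design matrix to a single truth-telling probability, impose the local-privacy ratio bound, restrict to the truthful regime $p>q$, and then pin the ratio at $e^{\varepsilon_r}$. The one genuine difference is the final step, which is exactly the point you flagged as your main obstacle. You resolve it by decreeing tightness ($p/q=e^{\varepsilon_r}$) as a normalization convention; the paper instead \emph{derives} the equality from an optimality argument: writing $a=\frac{p_{+1,+1}}{p_{+1,-1}}$ and $b=\frac{p_{-1,-1}}{p_{-1,+1}}$ with $a=b$ by Warner's model and $1<a\le e^{\varepsilon_r}$ by privacy plus truth-preference, it maximizes the total truth-telling probability $p_{+1,+1}+p_{-1,-1}=\frac{2a}{a+1}$, notes $\frac{\partial}{\partial a}\bigl(\frac{2a}{a+1}\bigr)=\frac{2}{(a+1)^2}>0$ so the objective is monotone on $(1,e^{\varepsilon_r}]$, and concludes the maximizer sits at the boundary $a=e^{\varepsilon_r}$. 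So where you must accept tightness as an axiom, the paper's utility-maximization argument answers the question ``why this matrix rather than any strictly more private feasible one'': it is the unique symmetric design matrix that is maximally truthful at privacy budget $\varepsilon_r$. Conversely, your handling of the privacy constraint itself is more careful than the paper's: you justify reducing the supremum over measurable sets in Definition~\ref{L-Privacy} to the four singleton likelihood ratios, whereas the paper simply asserts $a,b\le e^{\varepsilon_r}$, and its ratios are in fact same-input/different-output ratios rather than the LDP-relevant same-output/different-input ratios --- the two coincide only because of the model's symmetry. Replacing your tightness convention with the paper's one-line maximization of $\frac{2a}{a+1}$ would make your proof both complete and somewhat more rigorous than the original on the privacy-definition side.
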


\begin{remark}
We acquire noise rates ($\rho_+,\rho_-$) from the design matrix $P_c$ directly, and $\rho_+ = \rho_- = \frac{1}{e^{\varepsilon_r}+1}$, where $\rho_+ = \Pr(\tilde{y} = -1 | y = +1)$ and $\rho_- = \Pr(\tilde{y} = +1 | y = -1)$. Namely, we perturb the original $y$ to form $\tilde{y}$ by noise rates ($\rho_+,\rho_-$).
\end{remark}

\begin{proof}
According to the definition of design matrix $P_c$, $p_{+1,-1} = \rho_+ = \Pr(\tilde{y} = -1 | y = +1)$, $p_{-1,+1} = \rho_- = \Pr(\tilde{y} = +1 | y = -1)$, $p_{+1,+1} = \Pr(\tilde{y} = +1 | y = +1)$ and $p_{-1,-1} = \Pr(\tilde{y} = -1 | y = -1)$. Assume that $\frac{p_{+1,+1}}{p_{+1,-1}} = a$ and $\frac{p_{-1,-1}}{p_{-1,+1}} = b$. To meet $\varepsilon_r$-differentially local privacy, we have $a \leq e^{\varepsilon_r}$ and
$b \leq e^{\varepsilon_r}$. Here, we assume that randomized response still prefers the true value, namely $p_{+1,+1},p_{-1,-1} > 0.5$. Therefore, $1 < a \leq e^{\varepsilon_r}$ and
$1 < b \leq e^{\varepsilon_r}$. Since randomized response prefers the true value, we also hope to maximize $p_{+1,+1} + p_{-1,-1}$, which equally maximizes $\frac{2a}{a+1}$ by Warner's model ($a = b$). Due to $\frac{\partial(\frac{2a}{a+1})}{\partial a} = \frac{2}{(a+1)^2} > 0$, $\frac{2a}{a+1}$ is monotonically increasing in the feasible area $(1,e^{\varepsilon_r}]$. Therefore, we set $a = e^{\varepsilon_r}$ to achieve its maximum value, and the design matrix $P_c$ is Eq. \eqref{Pc}.
\end{proof}

\begin{definition}\label{P-Curriculum}
(Private Curriculum). For training data $\mathcal{D} = \{\mathbf{x}_i, y_i\}_{i=1}^n $, if $\mathcal{C}_{\mathbf{w}}$ denotes the current classifier, then the private curriculum $z$ in PRESTIGE can be calculated as the product of the perturbed label $\tilde{y}$ and the predicted label of an instance $\mathbf{x}$, where $\tilde{y}$ is perturbed from $y$ according to Proposition~\ref{randomizedresponse}. Namely, $z = \tilde{y}\mathcal{C}_{\mathbf{w}}(\mathbf{x})$, where $\tilde{y}$ is perturbed from $y$ by noise rates $\rho_+ = \rho_- = \frac{1}{e^{\varepsilon_r}+1}$.
\end{definition}

\begin{remark}\label{p-curriculum-remark}
In Algorithm~\ref{algorithm-PRESTIGE}, we use private curriculum (Definition~\ref{P-Curriculum}) instead of curriculum (Definition~\ref{Curriculum}). The reason is, without of label perturbation in Definition~\ref{P-Curriculum}, when the curriculum of sample $\{\mathbf{x}_{it}, y_{it}\}$ locates without the current update region, namely $z_{it}(\mathbf{w}^{(tmp)}) < D_{th}$, the algorithm itself will leak the information that this sample is adverse. This issue breaks the basic definition of local privacy.
\end{remark}

\subsubsection{Algorithm Realization}

\begin{algorithm*}
\KwIn{$\lambda \geq 0$, $b$, the max number of epochs $T_{max}$, the step size $\mu$, the loss function $r(\mathbf{w};\{\mathbf{x}_i,y_i\})$, the regularizer $\rho_\lambda(\mathbf{w}) = \frac{\lambda}{2}\lVert \mathbf{w} \rVert^2$, the training set $\mathcal{D} = \{\mathbf{x}_i, y_i\}_{i=1}^n$, two local privacy parameters $\varepsilon_r$, $\varepsilon_s \leq 1$ where $\varepsilon = \varepsilon_r + \varepsilon_s$, the Lipschitz parameter $L$ and the scalar bound $B \in \mathbb{R}_+$.}

{\bfseries Initialize:} $t = 0$, $\tilde{\mathbf{w}}^{(0)}$ randomly, the dynamic threshold \mbox{$D_{th} > 1$} by the max-margin principle (i.e., $D_{th} = 1.5$), the initial value of dynamic learning rate $\eta_0 = \frac{R}{B}.$

\For{$T = 1,2,\dotsc,T_{max}$}{
{\bfseries Assign:} $\mathbf{w}^{(tmp)} = \tilde{\mathbf{w}}^{(T-1)}$.

{\bfseries Shuffle:} $n$ training instances in $\mathcal{D}$.

\For{$k = 1,\dotsc,n$}{
   {\bfseries Sequentially pick:} $\{\mathbf{x}_{it},y_{it}\}$ from $\mathcal{D}$, $it \in \{1,...,n\}$.

   {\bfseries \underline{P-Curriculum:}} \mbox{$z_{it}(\mathbf{w}^{(tmp)}, \varepsilon_r) = (\langle \mathbf{w}^{(tmp)},\mathbf{x}_{it}\rangle + b)\tilde{y}_{it}$}, where $\tilde{y}_{it}$ from $y_{it}$ by noise rates $\rho_+ = \rho_- = \frac{1}{e^{\varepsilon_r}+1}$.

   {\bfseries If} \underline{$z_{it}(\mathbf{w}^{(tmp)},\varepsilon_r) \geq D_{th}$:}

\quad {\bfseries Update:} $t = t + 1$ and $\eta = \frac{\eta_0}{\lambda\sqrt{t}} = \frac{R}{\lambda B\sqrt{t}}$.

\quad {\bfseries Compute:} \mbox{$g = \lambda \mathbf{w}^{(tmp)} + \partial_{\mathbf{w}}r(\mathbf{w}^{(tmp)};\{\mathbf{x}_{it},\tilde{y}_{it}\})$}.

\quad {\bfseries Rescale:} $\tilde{g} = \left\{
\begin{array}{cc}
L\frac{g}{\lVert g\rVert_2},  &0.5 + \frac{\lVert g\rVert_2}{2L}\\
-L\frac{g}{\lVert g\rVert_2},  &0.5 - \frac{\lVert g\rVert_2}{2L}\\
\end{array}
\right.$.

\quad {\bfseries Sample:} $Q \sim \text{Bernoulli}(\frac{e^{\varepsilon_s}}{e^{\varepsilon_s}+1})$.

\quad {\bfseries Sample:} \mbox{$G_p \sim \left\{
\begin{array}{cc}
\text{Unif}(\langle g_p, \tilde{g}\rangle > 0), &Q = 1\\
\text{Unif}(\langle g_p, \tilde{g}\rangle \leq 0), &Q = 0\\
\end{array}
\right.$}, where $g_p \in \mathbb{R}^d$ and $\lVert g_p\rVert_2 = B$.

\quad {\bfseries \underline{P-Update:}} \mbox{$\mathbf{w}^{(new)} = \mathbf{w}^{(tmp)} - \eta G_p$}.

\quad {\bfseries Assign:} $\mathbf{w}^{(tmp)} = \mathbf{w}^{(new)}$.

}

 {\bfseries Assign:} $\tilde{\mathbf{w}}^{(T)} = \mathbf{w}^{(tmp)}$.

 {\bfseries Update:} $D_{th} = D_{th} - \mu\sqrt{T}$.
}
\KwOut{$\tilde{\mathbf{w}}^{(T_{max})}$.}
\caption{\textbf{PRESTIGE}: \textbf{P}rivacy-p\textbf{RE}serving \textbf{S}tochas\textbf{TI}c \textbf{G}radual l\textbf{E}arning \label{algorithm-PRESTIGE}}
\end{algorithm*}

We realize details of PRESTIGE in Algorithm~\ref{algorithm-PRESTIGE}, which consists of two key components: private curriculum (p-curriculum) calculation and judgement (line $8-9$ and $10$) and private update (p-update, line $17$). PRESTIGE preserves a composite privacy: p-curriculum privacy with parameter $\varepsilon_r$ (line $9$) and in-curriculum privacy with parameter $\varepsilon_s$ (line $14$) sequentially. Specifically, 1) PRESTIGE compute p-curriculum $z_{it}(\mathbf{w}^{(tmp)},\varepsilon_r)$ (Definition~\ref{P-Curriculum}) to preserve p-curriculum privacy, which is essentially based on randomized response~\cite{kairouz2016discrete}. 2) PRESTIGE conducts private sampling in line $11-16$ to preserve in-curriculum privacy, where we follow Duchi et al.~\cite{duchi2013local-focs}. 3) With the decrease of $D_{th}$ (line~$20$), PRESTIGE updates its primal variable from ``beneficial'' samples to ``adverse'' samples privately and robustly.

There is a point to be mentioned in Algorithm \ref{algorithm-PRESTIGE}. The private curriculum $z_{it}(\mathbf{w}^{(tmp)}, \varepsilon_r)$ is realized by the linear mapping function $\mathcal{C}_w$ in line~$8$. It means that we apply PRESTIGE to linear classification (i.e., SVM) model. Note that, both linear and nonlinear classification models are under the same empirical risk minimization (ERM) principle (Eq. \eqref{obj-classification}). Thus, the PRESTIGE mechanism can be readily leveraged by nonlinear classification model as well, if we represent $\mathcal{C}_w$ by deep neural networks.

\subsubsection{Composite Privacy} PRESTIGE preserves a composite privacy, using private curriculum and private sampling together. According to composition theorem in Lemma~\ref{composition}, if $\varepsilon = \varepsilon_r + \varepsilon_s$, PRESTIGE keeps $\varepsilon$-differentially local privacy.
\begin{lemma}\label{composition}
(Theorem 3.14 in \cite{dwork2014algorithmic}). Let $\mathcal{M}_1$ be an $\varepsilon_1$ differentially private algorithm, and let $\mathcal{M}_2$ be an $\varepsilon_2$ differentially private algorithm. Then their sequential combination, defined to be $\mathcal{M}_{1,2}$, is ($\varepsilon_1 + \varepsilon_2$)-differentially private.
\end{lemma}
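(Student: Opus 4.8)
The plan is to prove the claim directly from the likelihood-ratio form of differential privacy by exploiting the product structure of the combined mechanism's output distribution; the same argument applies verbatim whether one uses the global form (Definition~\ref{G-Privacy}) or the local form (Definition~\ref{L-Privacy}), since both assert a pointwise bound of $e^{\varepsilon}$ on a ratio of (conditional) output probabilities. First I would fix an arbitrary pair of neighboring inputs $\mathbf{X}, \mathbf{X}'$ and write the output of the sequential combination as the pair $\mathcal{M}_{1,2}(\mathbf{X}) = (\mathcal{M}_1(\mathbf{X}), \mathcal{M}_2(\mathbf{X}))$, with $\mathcal{M}_1$ and $\mathcal{M}_2$ drawing on independent internal randomness. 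The goal is to show $\Pr(\mathcal{M}_{1,2}(\mathbf{X}) \in S) \leq e^{\varepsilon_1 + \varepsilon_2}\Pr(\mathcal{M}_{1,2}(\mathbf{X}') \in S)$ for every measurable $S$ in the joint range.

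The key step is the factorization of the joint likelihood ratio. In the discrete case I would write, for each output pair $(r_1,r_2)$,
\begin{equation}
\frac{\Pr(\mathcal{M}_{1,2}(\mathbf{X}) = (r_1,r_2))}{\Pr(\mathcal{M}_{1,2}(\mathbf{X}') = (r_1,r_2))} = \frac{\Pr(\mathcal{M}_1(\mathbf{X}) = r_1)}{\Pr(\mathcal{M}_1(\mathbf{X}') = r_1)}\cdot\frac{\Pr(\mathcal{M}_2(\mathbf{X}) = r_2)}{\Pr(\mathcal{M}_2(\mathbf{X}') = r_2)},
\end{equation}
and then bound the two factors by $e^{\varepsilon_1}$ and $e^{\varepsilon_2}$ using the $\varepsilon_i$-differential privacy of $\mathcal{M}_1$ and $\mathcal{M}_2$; their product is $e^{\varepsilon_1+\varepsilon_2}$. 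For a continuous range I would replace point masses by densities $p_i(\cdot \mid \mathbf{X})$, integrate the pointwise inequality $p_1(r_1\mid\mathbf{X})\,p_2(r_2\mid\mathbf{X}) \leq e^{\varepsilon_1+\varepsilon_2}\,p_1(r_1\mid\mathbf{X}')\,p_2(r_2\mid\mathbf{X}')$ over $S$, and obtain the desired set inequality.

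The main obstacle is not the arithmetic but the adaptivity actually present in PRESTIGE: in Algorithm~\ref{algorithm-PRESTIGE} the in-curriculum sampling step ($\varepsilon_s$) operates on a gradient computed from the perturbed label $\tilde{y}_{it}$ produced by the private-curriculum step ($\varepsilon_r$), so the second mechanism depends on the output of the first. To cover this I would replace the independent factorization by a conditional one: fixing the realized output $r_1$ of $\mathcal{M}_1$, I would invoke the assumption that $\mathcal{M}_2(\cdot, r_1)$ is $\varepsilon_2$-differentially private \emph{uniformly} over every admissible $r_1$, so that $\Pr(\mathcal{M}_2(\mathbf{X},r_1)\in S_2) \leq e^{\varepsilon_2}\Pr(\mathcal{M}_2(\mathbf{X}',r_1)\in S_2)$ regardless of $r_1$, while the first-stage bound $e^{\varepsilon_1}$ is unchanged; chaining the two yields $e^{\varepsilon_1+\varepsilon_2}$ again. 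A secondary technical point is measurability—justifying the factorization of the joint measure and applying Fubini to the integrated bound—but for the finite-range randomized-response and private-sampling mechanisms used here this collapses to the elementary discrete summation above.
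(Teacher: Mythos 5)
Your proof is correct. A point of comparison worth knowing: the paper offers no proof of Lemma~\ref{composition} at all --- it imports the statement by citing Theorem~3.14 of \cite{dwork2014algorithmic} --- so the only benchmark is that textbook argument, and your first two paragraphs reproduce it essentially verbatim: fix neighboring inputs, factor the joint output probability of $(\mathcal{M}_1(\mathbf{X}),\mathcal{M}_2(\mathbf{X}))$ using independence of the internal randomness, bound the two likelihood ratios by $e^{\varepsilon_1}$ and $e^{\varepsilon_2}$, and sum or integrate over $S$. Your third paragraph is a genuine improvement over what the paper does, not merely a match. As you observe, the non-adaptive product form does not literally cover PRESTIGE: in Algorithm~\ref{algorithm-PRESTIGE} the private-sampling stage acts on a gradient computed from the perturbed label $\tilde{y}_{it}$ emitted by the randomized-response stage, so what is actually needed is adaptive composition, via the conditional factorization
\begin{equation*}
\Pr\bigl(\mathcal{M}_{1,2}(\mathbf{X}) = (r_1,r_2)\bigr)
= \Pr\bigl(\mathcal{M}_1(\mathbf{X}) = r_1\bigr)\,
\Pr\bigl(\mathcal{M}_2(\mathbf{X},r_1) = r_2\bigr),
\end{equation*}
together with the requirement that $\mathcal{M}_2(\cdot,r_1)$ be $\varepsilon_2$-differentially private uniformly in $r_1$ --- a requirement private sampling meets, since Proposition~\ref{LocallyPrivateSampling} holds for an arbitrary input gradient, hence in particular for one computed from $\tilde{y}_{it}$. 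This adaptive variant is also standard (it is proved in the same reference), so no new mathematics is required, but the paper relies on it silently while you make the reliance explicit; that closes a gap in the paper's exposition rather than opening one in yours. The only residual technicality is the usual convention for outcomes of probability zero in the pointwise ratio, which your set-level (summed/integrated) formulation sidesteps anyway.
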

\subsection{Theoretical Analysis}
Theorem~\ref{Convergence} reveals the convergence rate of PRESTIGE, which is similar to $\mathcal{O}(\frac{1}{\sqrt{T}})$. To obtain a sharper/faster convergence rate, we propose Corollary~\ref{corollary}. Theorem~\ref{LowerBound} explores the maximum complexity of PRESTIGE, which provides the formal termination criteria for early stopping. All theorems hold for the choice of any regularizer.
%
%

\subsubsection{Convergence Rate} We analyze the convergence rate of PRESTIGE, which demonstrates that our algorithm can converge. We use $\mathbb{E}\big[\cdot\big]$ to denote the \mbox{expectation}. Before delving into Theorem~\ref{Convergence}, we first present a fundamental lemma below.

\begin{lemma}\label{scalarbound}
If we leverage the Proposition~\ref{LocallyPrivateSampling} in Section \ref{Privacy-Preservation}, namely, we privatize the random variable $X \in \mathbb{R}^d$ into the variable $Z \in \mathbb{R}^d$ by private sampling, then to achieve the unbiasedness condition $\mathbb{E}[Z|X]=X$, the scalar bound $B$ should be set as:
\begin{equation}
B = L\sqrt{\pi}\frac{e^{\varepsilon_s}+1}{e^{\varepsilon_s}-1}\frac{d\Gamma (\frac{d-1}{2}+1)}{\Gamma(\frac{d}{2}+1)},
\end{equation}
where $\lVert X\rVert_2 \leq L$ and $\varepsilon_s$ is privacy parameter of private sampling.
\end{lemma}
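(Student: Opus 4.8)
The plan is to verify the unbiasedness condition $\mathbb{E}[Z \mid X] = X$ by exploiting the two-stage structure of the private sampling map of Proposition~\ref{LocallyPrivateSampling} together with the tower rule. Writing $X = g$ and $Z = G_p$, the randomization factors as $g \mapsto \tilde g \mapsto G_p$, so I would compute $\mathbb{E}[G_p \mid g] = \mathbb{E}\big[\,\mathbb{E}[G_p \mid \tilde g]\,\big|\,g\,\big]$ and then choose $B$ to force the right-hand side to equal $g$.

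First I would dispose of the rescaling stage. A direct computation shows that the two rescaling probabilities $\tfrac{1}{2} \pm \tfrac{\lVert g\rVert_2}{2L}$ are designed precisely so that
\begin{equation}
\mathbb{E}[\tilde g \mid g] = \Big(\tfrac{1}{2} + \tfrac{\lVert g\rVert_2}{2L}\Big) L\tfrac{g}{\lVert g\rVert_2} + \Big(\tfrac{1}{2} - \tfrac{\lVert g\rVert_2}{2L}\Big)\Big(-L\tfrac{g}{\lVert g\rVert_2}\Big) = g,
\end{equation}
so this stage is already unbiased and $\tilde g$ has the fixed norm $L$. Consequently the entire normalizing burden falls on the sampling stage, and it suffices to show that $\mathbb{E}[G_p \mid \tilde g]$ is a scalar multiple of $\tilde g$ whose factor, after dividing by $L$, equals $1$.

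Next I would evaluate $\mathbb{E}[G_p \mid \tilde g]$ in two pieces. Let $u = \tilde g / L$ be the unit pole. Conditioned on $Q=1$, $G_p$ is uniform on the radius-$B$ hemisphere $\{v : \lVert v\rVert_2 = B,\ \langle v,u\rangle > 0\}$, whose mean points along $u$; conditioned on $Q=0$ it is uniform on the antipodal hemisphere, with mean along $-u$. Averaging over $Q \sim \mathrm{Bernoulli}(\pi_{\varepsilon_s})$ contributes the factor $2\pi_{\varepsilon_s} - 1 = \frac{e^{\varepsilon_s}-1}{e^{\varepsilon_s}+1}$, giving $\mathbb{E}[G_p \mid \tilde g] = \frac{e^{\varepsilon_s}-1}{e^{\varepsilon_s}+1}\,m_d\,u$, where $m_d$ is the signed mean of the uniform law on the radius-$B$ hemisphere along its pole. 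By rotational symmetry $m_d = B\,h_d$ with $h_d$ the centroid height of the unit hemisphere, and this is the crux of the argument: passing to polar coordinates with polar angle $\theta$ measured from the pole, the surface element carries the weight $\sin^{d-2}\theta$, so
\begin{equation}
h_d = \frac{\int_0^{\pi/2}\cos\theta\,\sin^{d-2}\theta\,d\theta}{\int_0^{\pi/2}\sin^{d-2}\theta\,d\theta}.
\end{equation}
The numerator integrates elementarily, while the denominator is a Beta integral that I would convert into the ratio of Gamma functions appearing in the claim.

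Finally, combining the two stages through the tower rule yields $\mathbb{E}[G_p \mid g] = \frac{e^{\varepsilon_s}-1}{e^{\varepsilon_s}+1}\,\frac{m_d}{L}\,g$, and imposing $\mathbb{E}[Z\mid X] = X$ forces $\frac{e^{\varepsilon_s}-1}{e^{\varepsilon_s}+1}\,\frac{B h_d}{L} = 1$. Solving for $B$ and rewriting the Gamma ratio through the recursion $\Gamma(x+1) = x\Gamma(x)$ produces the stated value. I expect the main obstacle to be the hemisphere-centroid integral above: pinning down the normalizing constant exactly, including the dimension-dependent factor that multiplies the Gamma ratio, requires careful bookkeeping in the Beta-to-Gamma conversion, whereas the Bernoulli/privacy factor $\frac{e^{\varepsilon_s}-1}{e^{\varepsilon_s}+1}$ and the unbiasedness of the rescaling step are routine.
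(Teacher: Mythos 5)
Your overall route is the same one the paper takes: reduce unbiasedness to the centroid of a uniform sample on a hemisphere of radius $B$, pick up the factor $2\pi_{\varepsilon_s}-1 = \frac{e^{\varepsilon_s}-1}{e^{\varepsilon_s}+1}$ from the Bernoulli mixing, impose $\mathbb{E}[Z|X]=X$, and solve for $B$. In one respect your treatment is tidier than the paper's: you run the tower rule through both randomization stages, so the rescaling step $g \mapsto \tilde g$ is handled explicitly, whereas the paper effectively assumes $\lVert X\rVert_2 = L$ and works only with the second stage.

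However, the final step of your proposal does not deliver what you claim, and this is a genuine gap. Your hemisphere-centroid integral is set up correctly, and it evaluates to
\begin{equation*}
h_d \;=\; \frac{\int_0^{\pi/2}\cos\theta\,\sin^{d-2}\theta\,d\theta}{\int_0^{\pi/2}\sin^{d-2}\theta\,d\theta} \;=\; \frac{\frac{1}{d-1}}{\frac{\sqrt{\pi}\,\Gamma(\frac{d-1}{2})}{2\Gamma(\frac{d}{2})}} \;=\; \frac{2\Gamma(\frac{d}{2}+1)}{d\sqrt{\pi}\,\Gamma(\frac{d-1}{2}+1)},
\end{equation*}
so solving your own equation $\frac{e^{\varepsilon_s}-1}{e^{\varepsilon_s}+1}\frac{B h_d}{L}=1$ yields
\begin{equation*}
B \;=\; \frac{L\sqrt{\pi}}{2}\,\frac{e^{\varepsilon_s}+1}{e^{\varepsilon_s}-1}\,\frac{d\,\Gamma(\frac{d-1}{2}+1)}{\Gamma(\frac{d}{2}+1)},
\end{equation*}
which is one half of the bound stated in the lemma. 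A sanity check at $d=3$ confirms this: the centroid of a hemispherical shell sits at height $1/2$, so unbiasedness forces $B = 2L\frac{e^{\varepsilon_s}+1}{e^{\varepsilon_s}-1}$, while the lemma's formula gives $4L\frac{e^{\varepsilon_s}+1}{e^{\varepsilon_s}-1}$. So the step you describe as ``solving for $B$ produces the stated value'' is precisely the step that fails, unless a factor of $2$ is dropped in the Beta-to-Gamma conversion that you yourself flag as the crux. It is worth knowing that the paper's own proof contains exactly this slip: it computes $E[V] = 2e_1\frac{\Gamma(\frac{d}{2}+1)}{d\sqrt{\pi}\Gamma(\frac{d-1}{2}+1)}$ (correct, equal to $h_d e_1$), but when it ``generalizes'' to $\mathbb{E}[Z|X]$ the coefficient appears as $\frac{\Gamma(\frac{d}{2}+1)}{d\sqrt{\pi}\Gamma(\frac{d-1}{2}+1)}$ with the factor $2$ silently gone, and only that omission makes the stated $B$ come out. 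Your cleaner derivation, carried through honestly, exposes this factor-of-two mismatch (i.e., the stated $B$ is twice the truly unbiased value) rather than reproducing the lemma's constant; you should either report the corrected constant or explain where the factor of $2$ is meant to come from.
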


\begin{proof}
We privatize the random variable $X \in \mathbb{R}^d$ into the variable $Z \in \mathbb{R}^d$ by private sampling with privacy parameter $\varepsilon_s$. Moreover, we assume that $X \in \mathbb{R}^d$ with the constraint of $\lVert X \rVert_2 \leq L$. Since the private sampling is highly related to $d$-dimensional ball, therefore, we first explore the surface area of $d$-dimensional ball. According to the constraint, the radius of this ball should be $L$. Thus, the surface area (denoted as $SA$) of $d$-dimensional ball with the radius $L$ is $SA_d(L) = \frac{dL^{d-1}\pi^{\frac{d}{2}}}{\Gamma(\frac{d}{2}+1)}$. Note that, for simplicity, we first set $L = 1$ to derive a simple result, and then generalize this result with arbitrary $L$. If we sample a random variable $V$ uniformly on the surface of this ball with $L = 1$, and assume the first coordinate $V_1 = e_1$, then we have:
\begin{equation}\label{coordinate}
E[V] = \frac{2e_1}{SA_d(1)}\int_{0}^{1}SA_{d-1}(\sqrt{1-k^2})kdk,
\end{equation}
where we leverage the symmetry of $d$-dimensional ball. To integrate the Eq. \eqref{coordinate} easier, we switch from cartesian coordinate system to polar coordinate system. Namely:
\begin{equation}
\begin{split}
E[V] &= \frac{2e_1}{SA_d(1)}\int_{0}^{1}SA_{d-1}(\sqrt{1-k^2})kdk\\
& = e_1\frac{2SA_{d-1}(1)}{SA_d(1)}\int_{0}^{\frac{\pi}{2}}\cos^{d-2}(\theta)\sin(\theta)d\theta\\
& = e_1\frac{2SA_{d-1}(1)}{SA_d(1)}\int_{0}^{\frac{\pi}{2}}(-\frac{\frac{d}{d\theta}\cos^{d-1}(\theta)}{d-1})d\theta\\
& = e_1\frac{2SA_{d-1}(1)}{SA_d(1)}\frac{1}{d-1}\\
& = e_1 2 \frac{\Gamma(\frac{d}{2}+1)}{d\pi^{\frac{d}{2}}}\frac{(d-1)\pi^{\frac{d-1}{2}}}{\Gamma(\frac{d-1}{2}+1)}\frac{1}{d-1}\\
& = e_1 2 \frac{\Gamma(\frac{d}{2}+1)}{d\sqrt{\pi}\Gamma(\frac{d-1}{2}+1)}.
\end{split}
\end{equation}
Note that, the procedure of locally-private sampling is uniformly processed on the half of $d$-dimensional ball. Due to $\lVert \mathbb{E}[Z]\rVert_2 = B$ and $\lVert X\rVert_2 = L$, therefore, we generalize $\mathbb{E}[Z|X]$ from coefficients of $\mathbb{E}[V]$ by leveraging the rotational symmetry and different sampling probability:
\begin{equation}
\mathbb{E}[Z|X] = \frac{X}{\lVert X\rVert_2}B\frac{\Gamma(\frac{d}{2}+1)}{d\sqrt{\pi}\Gamma(\frac{d-1}{2}+1)}(\frac{e^{\varepsilon_s}}{e^{\varepsilon_s} + 1} - \frac{1}{e^{\varepsilon_s} + 1}).
\end{equation}
Therefore, to achieve the unbiasedness condition $\mathbb{E}[Z|X]=X$, we have:
\begin{equation}
\frac{B}{L}\frac{\Gamma(\frac{d}{2}+1)}{d\sqrt{\pi}\Gamma(\frac{d-1}{2}+1)}(\frac{e^{\varepsilon_s}}{e^{\varepsilon_s} + 1} - \frac{1}{e^{\varepsilon_s} + 1}) = 1.
\end{equation}
Namely, we have:
\begin{equation}
B = L\sqrt{\pi}\frac{e^{\varepsilon_s}+1}{e^{\varepsilon_s}-1}\frac{d\Gamma (\frac{d-1}{2}+1)}{\Gamma(\frac{d}{2}+1)}.
\end{equation}
Then the proof completes.
\end{proof}

\begin{theorem}\label{Convergence}
For PRESTIGE, consider that any $\mathbf{w} \in \mathbb{B}_d(R)$ and the loss is $L$-Lipschitz with respect to the $l_p$-norm for some $p \in [2,\infty]$. At $t$-th iteration after randomized response, let $G_{p_t}$ be generated from stochastic gradient $g_t \in \mathbb{R}^d$ by private sampling with privacy parameter $\varepsilon_s$, and $G_{p_t}$ be restricted under the scalar bound $B = L\sqrt{\pi}\frac{e^{\varepsilon_s}+1}{e^{\varepsilon_s}-1}\frac{d\Gamma (\frac{d-1}{2}+1)}{\Gamma(\frac{d}{2}+1)}$ (Lemma~\ref{scalarbound}). Assume that $\mathbf{w^*}$ is defined as a local minimum, and the dynamic learning rate $\eta_t$ is monotonically decreasing with $\eta_t = \frac{\eta_0}{\lambda\sqrt{t}}$, where $\eta_0 = \frac{R}{B}$. Then, after $T$ actual updates, the convergence rate of PRESTIGE in expectation is
\begin{equation*}
\mathbb{E}\big[F(\hat{\mathbf{w}}^{(T)})\big] - F(\mathbf{w^*}) \leq \frac{c(\varepsilon_s,\lambda)k(d)}{\sqrt{T}},
\end{equation*}
where $F()$ denotes the classification model (Eq.~\eqref{obj-classification}), $\hat{\mathbf{w}}^{(T)} = \frac{1}{T}\sum_{t=1}^{T}\mathbf{w}^{(t)}$, $c(\varepsilon_s,\lambda) = \frac{(\lambda^2+2)(e^{\varepsilon_s}+1)}{2\lambda\varepsilon_s}$ is a dynamic function related to privacy level $\varepsilon_s$ and regularization intensity $\lambda$, and $k(d) = LR\sqrt{\pi}\sqrt{d}$ is a constant dependent of dimension $d$.
\end{theorem}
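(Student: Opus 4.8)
The plan is to run the classical ``one-step progress'' analysis for stochastic gradient descent, using Lemma~\ref{scalarbound} as the engine that converts the privacy mechanism into a well-behaved stochastic oracle. Concretely, Lemma~\ref{scalarbound} guarantees that the privatized gradient is unbiased, $\mathbb{E}[G_{p_t}\mid g_t]=g_t$, once $B$ is set to the stated value, while the private-sampling construction of Proposition~\ref{LocallyPrivateSampling} enforces the hard norm bound $\lVert G_{p_t}\rVert_2=B$. Since $\rho_\lambda(\mathbf{w})=\frac{\lambda}{2}\lVert\mathbf{w}\rVert^2$ makes each $f_i$ (hence $F$) convex, and $g_t$ is a subgradient of $F$ at $\mathbf{w}^{(t)}$ in conditional expectation, these two facts are exactly what such a proof needs. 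I would first record them and fix the iterate average $\hat{\mathbf{w}}^{(T)}=\frac1T\sum_t\mathbf{w}^{(t)}$ as the object to bound.

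Next I would expand the squared distance to the optimum along the update $\mathbf{w}^{(t+1)}=\mathbf{w}^{(t)}-\eta_t G_{p_t}$:
\begin{equation*}
\lVert\mathbf{w}^{(t+1)}-\mathbf{w}^*\rVert^2=\lVert\mathbf{w}^{(t)}-\mathbf{w}^*\rVert^2-2\eta_t\langle G_{p_t},\mathbf{w}^{(t)}-\mathbf{w}^*\rangle+\eta_t^2\lVert G_{p_t}\rVert^2.
\end{equation*}
Taking conditional expectation, substituting $\mathbb{E}[G_{p_t}\mid g_t]=g_t$ and $\lVert G_{p_t}\rVert^2=B^2$, and using convexity in the form $\langle g_t,\mathbf{w}^{(t)}-\mathbf{w}^*\rangle\ge F(\mathbf{w}^{(t)})-F(\mathbf{w}^*)$, I would rearrange to the standard bound $\mathbb{E}[F(\mathbf{w}^{(t)})-F(\mathbf{w}^*)]\le\frac{1}{2\eta_t}(D_t-D_{t+1})+\frac{\eta_t B^2}{2}$, writing $D_t=\mathbb{E}\lVert\mathbf{w}^{(t)}-\mathbf{w}^*\rVert^2$. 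Summing over $t=1,\dots,T$, the first terms are handled by Abel summation: since $\eta_t$ is decreasing, $\sum_t\frac{1}{2\eta_t}(D_t-D_{t+1})$ collapses to a term of order $R^2/\eta_T$ after bounding each $D_t$ by the ball diameter $(2R)^2$ (using $\mathbf{w}^{(t)},\mathbf{w}^*\in\mathbb{B}_d(R)$); the second terms give $\frac{B^2}{2}\sum_t\eta_t$. Finally Jensen's inequality, $F(\hat{\mathbf{w}}^{(T)})\le\frac1T\sum_t F(\mathbf{w}^{(t)})$, transfers the averaged bound to $\hat{\mathbf{w}}^{(T)}$.

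The rate then drops out of the step-size schedule. With $\eta_t=\frac{R}{B\lambda\sqrt t}$ one has $\frac{1}{\eta_T}=\frac{B\lambda\sqrt T}{R}$ and $\sum_{t=1}^T\eta_t\le\frac{R}{B\lambda}\cdot 2\sqrt T$ from $\sum_t t^{-1/2}\le 2\sqrt T$; after dividing by $T$, both the $R^2/\eta_T$ term and the $B^2\sum\eta_t$ term scale like $\frac{BR}{\sqrt T}$, the former carrying a factor $\lambda$ and the latter a factor $1/\lambda$, which combine into the $\frac{\lambda^2+2}{\lambda}$ shape of $c(\varepsilon_s,\lambda)$. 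Substituting $B=L\sqrt{\pi}\frac{e^{\varepsilon_s}+1}{e^{\varepsilon_s}-1}\frac{d\Gamma(\frac{d-1}{2}+1)}{\Gamma(\frac{d}{2}+1)}$ from Lemma~\ref{scalarbound} and bounding $\frac{1}{e^{\varepsilon_s}-1}\le\frac{1}{\varepsilon_s}$ (valid since $e^{x}\ge 1+x$) produces the privacy factor $\frac{e^{\varepsilon_s}+1}{\varepsilon_s}$, while the Gamma ratio is absorbed into the dimension constant $k(d)=LR\sqrt{\pi}\sqrt d$, giving $\mathbb{E}[F(\hat{\mathbf{w}}^{(T)})]-F(\mathbf{w}^*)\le\frac{c(\varepsilon_s,\lambda)k(d)}{\sqrt T}$.

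I expect the main obstacle to be twofold and largely bookkeeping rather than conceptual. First, the oracle seen by the convergence argument is the composition of two independent perturbations --- randomized response on the label followed by private sampling of the gradient --- so I must argue that, conditioned on the iteration ``after randomized response'' as the statement phrases it, $g_t$ still acts as an unbiased subgradient of $F$; the symmetric noise rates $\rho_+=\rho_-=\frac{1}{e^{\varepsilon_r}+1}$ are what keep this clean. Second, reproducing the precise constant $c(\varepsilon_s,\lambda)=\frac{(\lambda^2+2)(e^{\varepsilon_s}+1)}{2\lambda\varepsilon_s}$ requires careful tracking of the telescoped boundary terms and of the diameter bound through the Abel summation, together with the (slightly loose) identification of $\frac{d\Gamma(\frac{d-1}{2}+1)}{\Gamma(\frac{d}{2}+1)}$ with $\sqrt d$; the $\mathcal{O}(1/\sqrt T)$ rate itself is robust, but matching the stated constant exactly is the delicate part.
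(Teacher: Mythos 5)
Your proposal follows essentially the same route as the paper's own proof: the one-step expansion of $\lVert\mathbf{w}^{(t+1)}-\mathbf{w}^{*}\rVert_2^2$, unbiasedness $\mathbb{E}[G_{p_t}\mid g_t]=g_t$ with $\lVert G_{p_t}\rVert_2=B$ from Lemma~\ref{scalarbound}, first-order convexity, telescoping under decreasing $\eta_t$, the bound $\sum_{t=1}^{T}t^{-1/2}\le 2\sqrt{T}$, and the final substitutions $\frac{d\Gamma(\frac{d-1}{2}+1)}{\Gamma(\frac{d}{2}+1)}\le\sqrt{d}$ and $e^{\varepsilon_s}-1\ge\varepsilon_s$ that turn $RB$ into $c(\varepsilon_s,\lambda)k(d)$. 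The only divergence is the telescoping boundary term: your rigorous Abel summation with the diameter bound $(2R)^2$ yields a constant four times the paper's on that term, whereas the paper asserts $\sum_{t}\frac{1}{2\eta_t}\bigl(\lVert\mathbf{w}^{(t)}-\mathbf{w}^{*}\rVert_2^2-\lVert\mathbf{w}^{(t+1)}-\mathbf{w}^{*}\rVert_2^2\bigr)\le\frac{1}{2\eta_T}\lVert\mathbf{w}^{(1)}-\mathbf{w}^{*}\rVert_2^2\le\frac{R^2}{2\eta_T}$ (itself a loose step), so you recover the $\mathcal{O}(1/\sqrt{T})$ rate and the structure of $c(\varepsilon_s,\lambda)$ exactly as the paper does, just with a slightly larger absolute constant.
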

\begin{remark}
We conclude that when $T = \frac{c^2(\varepsilon_s,\lambda)k^2(d)}{\epsilon^2}$, PRESTIGE has $\epsilon$-solution~\footnote{Please refer to the page $47/315$ in KDD15 tutorial: \url{https://homepage.cs.uiowa.edu/~tyng/kdd15-tutorial.pdf} for a good visualization of $\epsilon$.}. Moreover, the convergence rate of PRESTIGE is $\mathcal{O}(\frac{\lambda e^{\varepsilon_s}\sqrt{d}}{\varepsilon_s\sqrt{T}})$, which is highly related to the regularization intensity $\lambda$, privacy parameter $\varepsilon_s$ and dimension $d$. Note that, the convergence rate of PRESTIGE can be further sped up with Nesterov’s accelerated strategy \cite{nesterov2007gradient,o2015adaptive}, which will be discussed in our future work.
\end{remark}

\begin{proof}
According to the update rule of PRESTIGE, $\mathbf{w}^{(t+1)} = \mathbf{\mathbf{w}}^{(t)} -\eta_t G_{p_t}$, where $G_{p_t}$ is generated from $g_t = \nabla f_{it}(\mathbf{w}^{(t)})$ by private sampling, and $t$ is the current number of actual updates varying from $1 \cdots T$. The random number $it$ belongs to the set $\{1,...,n\}$ while $z_{it}(\mathbf{w}^{(t-1)}) \geq D_{th}$. Since $G_{p_t}$ is restricted under the bound $B = L\sqrt{\pi}\frac{e^{\varepsilon_s}+1}{e^{\varepsilon_s}-1}\frac{d\Gamma (\frac{d-1}{2}+1)}{\Gamma(\frac{d}{2}+1)}$, therefore, we have $\mathbb{E}[G_{p_t}|g_t]=g_t$. Assume the number of training sample $n$ is very large, due to~\eqref{obj-classification}, $\lvert \mathbb{E}\big[\nabla f_{it}(\mathbf{w}^{(t)})\big] - \nabla F(\mathbf{w}^{(t)})\rvert \leq \epsilon$. With the increase of actual updates, the decrease of dynamic threshold $D_{th}$ makes $\epsilon$ monotonically decrease towards $0$. Thus, we construct the following inequality:
\begin{equation}\label{original}
\begin{split}
&\quad\quad\frac{1}{2}\lVert \mathbf{w}^{(t+1)} - \mathbf{w^*} \rVert_2^2\\
& \leq \frac{1}{2} \lVert \mathbf{w}^{(t)} - \mathbf{w^*} \rVert_2^2 + \frac{{\eta_t}^2}{2}\lVert G_{p_t} \rVert_2^2 - \eta_t \langle G_{p_t}, \mathbf{w}^{(t)} - \mathbf{w^*} \rangle.
\end{split}
\end{equation}

Due to the characteristic of first-order convexity, we have:
\begin{equation}\label{1st-order}
\langle \nabla F(\mathbf{w}^{(t)}), \mathbf{w}^{(t)} - \mathbf{w^*} \rangle \geq F(\mathbf{w}^{(t)}) - F(\mathbf{w}^{*}).
\end{equation}

If we multiply $-\eta_t$ to both hand side of Eq. \eqref{1st-order}, we have:
\begin{equation}
-\eta_t \langle \nabla F(\mathbf{w}^{(t)}), \mathbf{w}^{(t)} - \mathbf{w^*} \rangle \leq -\eta_t (F(\mathbf{w}^{(t)}) - F(\mathbf{w}^{*})).
\end{equation}

Therefore, Eq. \eqref{original} can be further calculated as below when we define $\epsilon_t = G_{p_t} - \nabla F(\mathbf{w}^{(t)})$:
\begin{equation}\label{private-eq}
\begin{split}
&\quad\quad\frac{1}{2}\lVert \mathbf{w}^{(t+1)} - \mathbf{w^*} \rVert_2^2\\
& \leq \frac{1}{2} \lVert \mathbf{w}^{(t)} - \mathbf{w^*} \rVert_2^2 + \frac{{\eta_t}^2}{2}\lVert G_{p_t} \rVert_2^2 -\eta_t (F(\mathbf{w}^{(t)}) - F(\mathbf{w}^{*}))\\
& \quad -\eta_t \langle \epsilon_t, \mathbf{w}^{(t)} - \mathbf{w^*} \rangle.
\end{split}
\end{equation}

If we exchange items of Eq. \eqref{private-eq}, namely, $\frac{1}{2}\lVert \mathbf{w}^{(t+1)} - \mathbf{w^*} \rVert_2^2$ and $\eta_t(F(\mathbf{w}^{(t)}) - F(\mathbf{w}^{*}))$ we have:
\begin{equation}\label{private-eq-rev}
\begin{split}
&\quad\quad F(\mathbf{w}^{(t)}) - F(\mathbf{w}^{*}) \\
& \leq \frac{1}{2\eta_t} (\lVert \mathbf{w}^{(t)} - \mathbf{w^*} \rVert_2^2 - \lVert \mathbf{w}^{(t+1)} - \mathbf{w^*} \rVert_2^2) + \frac{\eta_t}{2}\lVert G_{p_t} \rVert_2^2\\
& \quad - \langle \epsilon_t, \mathbf{w}^{(t)} - \mathbf{w^*} \rangle.
\end{split}
\end{equation}

If we sum up the both hand side of Eq. \eqref{private-eq-rev} from $1 \cdots T$, then we have:
\begin{equation}\label{sum-eq}
\begin{split}
&\quad\quad \sum_{t=1}^{T}\big[F(\mathbf{w}^{(t)}) - F(\mathbf{w}^{*})\big] \\
& \leq \sum_{t=1}^{T}\{\frac{1}{2\eta_t} (\lVert \mathbf{w}^{(t)} - \mathbf{w^*} \rVert_2^2 - \lVert \mathbf{w}^{(t+1)} - \mathbf{w^*} \rVert_2^2)\}\\
& \quad  + \frac{1}{2}\sum_{t=1}^{T}\eta_t \lVert G_{p_t} \rVert_2^2 - \sum_{t=1}^{T}\langle \epsilon_t, \mathbf{w}^{(t)} - \mathbf{w^*} \rangle.
\end{split}
\end{equation}

Note that, since $\eta_t$ is monotonically decreasing, namely $\eta_t \geq \eta_{t+1}$, then we have $\frac{1}{\eta_1} \leq \cdots \leq \frac{1}{\eta_{T}}$. Based on this result, we arrange:

\begin{equation}\label{basic-eq}
\begin{split}
&\quad\quad \sum_{t=1}^{T}\{\frac{1}{2\eta_t} (\lVert \mathbf{w}^{(t)} - \mathbf{w^*} \rVert_2^2 - \lVert \mathbf{w}^{(t+1)} - \mathbf{w^*} \rVert_2^2)\}\\
&\leq \frac{1}{2\eta_T}\lVert \mathbf{w}^{(1)} - \mathbf{w^*} \rVert_2^2 \leq \frac{1}{2\eta_T}\lVert \mathbf{w}^{(1)}\rVert_2^2 \leq \frac{1}{2\eta_T}R^2.
\end{split}
\end{equation}

Based on Eq. \eqref{basic-eq}, we transform Eq. \eqref{sum-eq} into:
\begin{equation}\label{sum-eq-2}
\begin{split}
&\quad\quad \sum_{t=1}^{T}\big[F(\mathbf{w}^{(t)}) - F(\mathbf{w}^{*})\big] \\
& \leq \frac{1}{2\eta_T}R^2  + \frac{1}{2}\sum_{t=1}^{T}\eta_t \lVert G_{p_t} \rVert_2^2 - \sum_{t=1}^{T}\langle \epsilon_t, \mathbf{w}^{(t)} - \mathbf{w^*} \rangle.
\end{split}
\end{equation}

Due to the strategy of local privacy preservation, we have $\lVert G_{p_t} \rVert_2^2 = B$. Then, we take expectation for Eq. \eqref{sum-eq-2}:
\begin{equation}
\mathbb{E}\{\sum_{t=1}^{T}[F(\mathbf{w}^{(t)}) - F(\mathbf{w}^{*})]\} \leq \frac{1}{2\eta_T}R^2  + \frac{1}{2}\sum_{t=1}^{T}\eta_t B^2,
\end{equation}

where $\mathbb{E}[\sum_{t=1}^{T}\langle \epsilon_t, \mathbf{w}^{(t)} - \mathbf{w^*} \rangle] = 0$ because the following derivations:

\begin{equation}
\mathbb{E}[\langle \epsilon_t, \mathbf{w}^{(t)} - \mathbf{w^*} \rangle] = \mathbb{E}[\langle g_t - \nabla F(\mathbf{w}^{(t)}), \mathbf{w}^{(t)} - \mathbf{w^*} \rangle] = 0.
\end{equation}

If we set $\eta_t = \frac{\eta_0}{\lambda\sqrt{t}} = \frac{R}{\lambda B\sqrt{t}}$ to make sure that $\eta_t$ is monotonically decreasing, we have
\begin{equation}\label{core-eq}
\mathbb{E}\{\frac{1}{T}\sum_{t=1}^{T}[F(\mathbf{w}^{(t)}) - F(\mathbf{w}^{*})]\} \leq \frac{1}{2T\frac{R}{\lambda B\sqrt{T}}}R^2  + \frac{B^2}{2T}\frac{R}{\lambda B}\sum_{t=1}^{T}\frac{1}{\sqrt{t}}.
\end{equation}

Due to the following golden rule:
\begin{equation}
\sum_{t=1}^{T}\frac{1}{\sqrt{t}} \leq \int_{0}^{T} x^{-\frac{1}{2}}dx = 2\sqrt{T}.
\end{equation}

Now Eq. \eqref{core-eq} can be transformed into:
\begin{equation}
\mathbb{E}[F(\hat{\mathbf{w}}^{(T)}) - F(\mathbf{w}^{*})] \leq \frac{\lambda RB}{2\sqrt{T}} + \frac{RB}{\lambda\sqrt{T}} = \frac{\lambda^2+2}{2\lambda}\frac{RB}{\sqrt{T}}.
\end{equation}

According to Appendix F.2 in \cite{duchi2013local-arXiv}, to achieve $\mathbb{E}[G_{p_t}|g_t] = g_t$, we should set $B = L\sqrt{\pi}\frac{e^{\varepsilon_s}+1}{e^{\varepsilon_s}-1}\frac{d\Gamma (\frac{d-1}{2}+1)}{\Gamma(\frac{d}{2}+1)}$. Since $\frac{d\Gamma (\frac{d-1}{2}+1)}{\Gamma(\frac{d}{2}+1)} \leq \sqrt{d}$ and $e^{\varepsilon_s} - 1 = \varepsilon_s + \mathcal{O}(\varepsilon_s^2)$ (Taylor series of exponential function), we have:

\begin{equation}
\begin{split}
B &= L\sqrt{\pi}\frac{e^{\varepsilon_s}+1}{e^{\varepsilon_s}-1}\frac{d\Gamma (\frac{d-1}{2}+1)}{\Gamma(\frac{d}{2}+1)}\\
&\leq L\sqrt{\pi}\frac{e^{\varepsilon_s}+1}{\varepsilon_s}\sqrt{d} = \frac{\sqrt{d}}{\varepsilon_s}(e^{\varepsilon_s}+1)L\sqrt{\pi}.
\end{split}
\end{equation}

Therefore, we have:
\begin{equation}
\begin{split}
& \quad\quad \mathbb{E}[F(\hat{\mathbf{w}}^{(T)}) - F(\mathbf{w}^{*})]\\
& \leq \frac{\lambda^2+2}{2\lambda}\frac{RB}{\sqrt{T}}\\
& \leq \frac{\lambda^2+2}{2\lambda} \frac{R}{\sqrt{T}}\frac{\sqrt{d}}{\varepsilon_s}(e^{\varepsilon_s}+1)L\sqrt{\pi}\\
& = \frac{(\lambda^2+2)(e^{\varepsilon_s}+1)}{2\lambda\varepsilon_s}\frac{LR\sqrt{\pi}\sqrt{d}}{\sqrt{T}}\\
& = \frac{c(\varepsilon_s,\lambda)k(d)}{\sqrt{T}}.
\end{split}
\end{equation}
where $c(\varepsilon_s,\lambda) = \frac{(\lambda^2+2)(e^{\varepsilon_s}+1)}{2\lambda\varepsilon_s}$ is a dynamic function related to privacy level $\varepsilon_s$ and regularization intensity $\lambda$, and $k(d) = LR\sqrt{\pi}\sqrt{d}$ is a constant dependent of $d$ dimension.
\end{proof}

\begin{corollary}\label{corollary}
Suppose that $q$ conjugates to $p$ and satisfies that $\frac{1}{p} + \frac{1}{q} = 1$. If we restrict $\mathbf{w} \in \mathbb{B}_d(Rd^{\frac{1}{2}-\frac{1}{q}})$, due to $p \in [2,\infty]$, then we have $\lVert \mathbf{w} \rVert_2 \leq Rd^{\frac{1}{2}-\frac{1}{q}} \leq R$. Therefore, the convergence rate of PRESTIGE in expectation can be sharper as follows:
\begin{equation}
\mathbb{E}\big[F(\hat{\mathbf{w}}^{(T)})\big] - F(\mathbf{w^*}) \leq \frac{c(\varepsilon_s,\lambda)k'(d)}{\sqrt{T}},
\end{equation}
where $\hat{\mathbf{w}}^{(T)} = \frac{1}{T}\sum_{t=1}^{T}\mathbf{w}^{(t)}$, $c(\varepsilon_s,\lambda) = \frac{(\lambda^2+2)(e^{\varepsilon_s}+1)}{2\lambda\varepsilon_s}$, and $k'(d) = LR\sqrt{\pi}d^{1-\frac{1}{q}}$ is a reduced constant, and \mbox{$k'(d) < k(d)$}.
\end{corollary}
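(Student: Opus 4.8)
The plan is to obtain the corollary as a direct specialization of Theorem~\ref{Convergence}: I would re-run the entire convergence argument with the tightened domain radius $R':=Rd^{\frac{1}{2}-\frac{1}{q}}$ in place of $R$, and then re-express the resulting constant in terms of the original $R$. The key observation is that the domain radius enters the proof of Theorem~\ref{Convergence} only in two linear places — the bound $\lVert\mathbf{w}^{(1)}\rVert_2\le R$ used in Eq.~\eqref{basic-eq} and the step size $\eta_0=R/B$ — whereas the first-order convexity step, the unbiasedness $\mathbb{E}[G_{p_t}\mid g_t]=g_t$ of Lemma~\ref{scalarbound}, the scalar bound $B$, and the harmonic-sum estimate are all insensitive to shrinking the feasible set.

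First I would justify that the restriction is admissible. With the loss $L$-Lipschitz in the $l_p$-norm for $p\in[2,\infty]$ and $q$ its conjugate ($\frac{1}{p}+\frac{1}{q}=1$), we have $q\in[1,2]$ and hence $\frac{1}{2}-\frac{1}{q}\le0$, so $R'=Rd^{\frac{1}{2}-\frac{1}{q}}\le R$; confining $\mathbf{w}$ to $\mathbb{B}_d(R')$ thus gives a genuinely smaller Euclidean ball on which $\lVert\mathbf{w}\rVert_2\le R'$. I would also note that the gradient bound $\lVert g\rVert_2\le L$ required by the private sampling of Proposition~\ref{LocallyPrivateSampling} is preserved, since $\lVert g\rVert_2\le\lVert g\rVert_q\le L$ for $q\le2$, so every privacy and unbiasedness ingredient of Theorem~\ref{Convergence} carries over unchanged.

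Next I would perform the substitution $R\mapsto R'$ throughout the derivation that ends in the final display of Theorem~\ref{Convergence}, yielding
\begin{equation*}
\mathbb{E}\big[F(\hat{\mathbf{w}}^{(T)})\big]-F(\mathbf{w}^*)\le\frac{\lambda^2+2}{2\lambda}\frac{R'B}{\sqrt{T}}.
\end{equation*}
Inserting the bound $B\le\frac{\sqrt{d}}{\varepsilon_s}(e^{\varepsilon_s}+1)L\sqrt{\pi}$ from Lemma~\ref{scalarbound} and collapsing the dimensional factors via $R'\sqrt{d}=Rd^{\frac{1}{2}-\frac{1}{q}}d^{\frac{1}{2}}=Rd^{1-\frac{1}{q}}$ turns the right-hand side into $\frac{c(\varepsilon_s,\lambda)k'(d)}{\sqrt{T}}$ with $k'(d)=LR\sqrt{\pi}\,d^{1-\frac{1}{q}}$, exactly as claimed.

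Finally, to confirm the sharpening $k'(d)<k(d)$ I would simply compare exponents: for $p>2$ we have $q\in[1,2)$, whence $1-\frac{1}{q}<\frac{1}{2}$ and therefore $d^{1-\frac{1}{q}}<d^{1/2}=\sqrt{d}$, giving $k'(d)<k(d)=LR\sqrt{\pi}\sqrt{d}$. I expect the only genuinely delicate point to be the conceptual justification in the second step — verifying that confining the iterates and the comparison optimum $\mathbf{w}^*$ to the smaller ball $\mathbb{B}_d(R')$ is legitimate under the $l_p$-Lipschitz geometry and leaves the Euclidean-norm estimates intact; once that is granted, the remaining work is a routine relabeling of $R$ and the algebraic collapse of the dimension factors.
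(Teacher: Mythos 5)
Your proposal is correct and follows essentially the same route the paper intends: the corollary is obtained by re-running Theorem~\ref{Convergence} on the restricted ball of radius $Rd^{\frac{1}{2}-\frac{1}{q}}$ and collapsing $R'\sqrt{d}=Rd^{1-\frac{1}{q}}$ to get $k'(d)$ (the paper gives no separate proof; its accompanying Remark only notes an alternative derivation via the minimax rate of Agarwal et al.\ combined with the privacy factor $c(\varepsilon_s,\lambda)\sqrt{d}$). Your added care — checking that the dual-norm gradient bound survives ($\lVert g\rVert_2\le\lVert g\rVert_q\le L$ for $q\le 2$) and noting that $k'(d)<k(d)$ is strict only when $p>2$ — is a minor sharpening of details the paper glosses over.
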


\begin{remark}
The result of Corollary 1 can be also derived in parallel by Eq. (11) of \cite{agarwal2010information} coupled with $\varepsilon_s$-differentially local privacy. Namely, the general minimax rate for the convergence rate is $\frac{LR\sqrt{\pi}d^{\frac{1}{2}-\frac{1}{q}}}{\sqrt{T}}$. The price for $\varepsilon_s$-differentially local privacy is a multiplicative factor ``$c(\varepsilon_s,\lambda)\sqrt{d}$''. Therefore, if we combine the result of general minimax rate with this factor, we can also derive the Corollary~\ref{corollary}.
\end{remark}

\subsubsection{Maximum Complexity} To provide the formal termination criteria for early stopping, we leverage the learning framework of KWIK (knows what it knows) \cite{li2008knows} to explore the maximum complexity ($T_{\max}$) of PRESTIGE. KWIK framework combines elements of the well-known Probably Approximately Correct (PAC) \cite{valiant1984theory} and Mistake-Bound (MB) \cite{littlestone1988learning} models. According our observation, PRESTIGE can be viewed as a curriculum KWIK algorithm. Specifically, in Algorithm~\ref{algorithm-PRESTIGE}, the input of PRESTIGE is $\mathcal{D} = \{\mathbf{x}_i, y_i\}_{i=1}^n$, and the output is the updated $\mathbf{w}$. Then, PRESTIGE is denoted as the predicted function $h$ in the hypothesis class $\mathcal{H}$, where $\mathcal{H} \subseteq (\mathcal{D} \rightarrow \mathbf{w})$. Before delving into Theorem~\ref{LowerBound}, we adapt two definitions by KWIK protocol.

\begin{definition}\label{Accuracy-Requirement}
(Accuracy Requirement). Given any updated $\mathbf{w} \in \mathbb{B}_d(R)$. Assume that $\mathbf{w^*}$ is defined as a local minimum. Therefore, $\mathbf{w}$ in the updated area must be $R$-accurate, that is, $\lVert \mathbf{w} - \mathbf{w^*} \rVert_2 \leq R$ whenever $z \geq D_{th}$.
\end{definition}

\begin{definition}\label{Sample-Complexity-Requirement}
(Sample Complexity Requirement). The virtual complexity
of PRESTIGE in the infeasible area ($z \leq D_{th}$), is bounded by a function polynomial in $1/R$, $1/\delta$, and dim($\mathcal{H}$), where dim($\mathcal{H}$) measures the dimension or complexity of the hypothesis class $\mathcal{H}$.
\end{definition}

\begin{theorem}\label{LowerBound} Given the complexity ratio $r$ between maximum and virtual complexity of PRESTIGE, $\mathcal{H}$ is efficiently KWIK-learnable if PRESTIGE satisfies Definition~\ref{Accuracy-Requirement} and \ref{Sample-Complexity-Requirement} simultaneously with probability at least $1-\delta$ ($0 < \delta < 1$). Then, the maximum complexity of PRESTIGE, $T_{\max}$, is polynomial in $r$, $1/R$, $1/\delta$, and dim($\mathcal{H}$).
\end{theorem}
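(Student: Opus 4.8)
The plan is to realize PRESTIGE as a concrete instance of the KWIK protocol~\cite{li2008knows} and then to obtain the bound on $T_{\max}$ by combining the two stated requirements with the complexity ratio $r$. First I would fix the correspondence between Algorithm~\ref{algorithm-PRESTIGE} and the KWIK game: at each presented sample, the threshold judgement (line~10) plays the role of the learner's response, so that committing an actual update when $z_{it}(\mathbf{w}^{(tmp)},\varepsilon_r) \geq D_{th}$ corresponds to issuing a \emph{prediction}, while declining to update when $z_{it}(\mathbf{w}^{(tmp)},\varepsilon_r) \leq D_{th}$ corresponds to the KWIK ``$\bot$'' (``I don't know''). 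Under this identification the number of $\bot$ responses is exactly the count of encounters with the infeasible region $z \leq D_{th}$, i.e., the \emph{virtual complexity} of Definition~\ref{Sample-Complexity-Requirement}, and the number of committed predictions is the number of actual updates $T_{\max}$.

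Next I would check the two defining conditions of efficient KWIK-learnability against our two assumptions. The accuracy condition --- every committed prediction must be accurate with high probability --- is delivered verbatim by the Accuracy Requirement (Definition~\ref{Accuracy-Requirement}): whenever $z \geq D_{th}$ the updated iterate obeys $\lVert \mathbf{w} - \mathbf{w}^* \rVert_2 \leq R$, with the radius $R$ playing the role of the KWIK accuracy parameter $\epsilon$. The bounded-$\bot$ condition --- the number of ``I don't know'' responses must be polynomial --- is delivered by the Sample Complexity Requirement (Definition~\ref{Sample-Complexity-Requirement}), which bounds the virtual complexity on $z \leq D_{th}$ by a polynomial in $1/R$, $1/\delta$, and $\dim(\mathcal{H})$. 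Since both requirements are assumed to hold simultaneously with probability at least $1-\delta$, the definition of efficient KWIK-learnability is satisfied --- the per-timestep cost being that of a single private gradient update and hence polynomial --- and $\mathcal{H}$ is efficiently KWIK-learnable with a KWIK bound equal to this virtual complexity, which I denote $B_{v}(1/R,1/\delta,\dim(\mathcal{H}))$.

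Finally I would lift this virtual bound to the maximum complexity through the complexity ratio. Because $r$ is defined as the ratio between the maximum and the virtual complexity of PRESTIGE, the total number of actual updates satisfies $T_{\max} = r\,B_{v}(1/R,1/\delta,\dim(\mathcal{H}))$. As the product of $r$ with a polynomial in $1/R$, $1/\delta$, and $\dim(\mathcal{H})$ is again polynomial in $r$, $1/R$, $1/\delta$, and $\dim(\mathcal{H})$, the claimed bound on $T_{\max}$ follows, and this $T_{\max}$ can be reported as the formal early-stopping criterion.

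I expect the main obstacle to sit entirely in the first step: arguing rigorously that the threshold-gated update of PRESTIGE genuinely implements a KWIK learner. In particular I would need to justify that ``declining to update'' is a legitimate $\bot$ --- the learner must subsequently be able to revisit such a sample and extract usable information, which here is furnished by the reshuffling each epoch together with the gradual lowering of $D_{th}$ in line~20 --- and that the randomized-response perturbation of the private curriculum (Definition~\ref{P-Curriculum}) does not invalidate the accuracy guarantee of Definition~\ref{Accuracy-Requirement}. Once this modelling correspondence is pinned down, the remaining passage through the complexity ratio is essentially definitional bookkeeping.
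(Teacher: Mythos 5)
The paper contains no proof of this theorem at all---it is stated bare, followed only by a remark asserting that PRESTIGE is a curriculum KWIK algorithm with KWIK bound $\mathcal{O}(\frac{r}{R}\ln\frac{1}{\delta})$---so your proposal can only be judged against what the statement, Definitions~\ref{Accuracy-Requirement} and~\ref{Sample-Complexity-Requirement}, and that remark jointly imply. Your reconstruction (actual updates as KWIK predictions, declined updates as the KWIK $\bot$, Definition~\ref{Accuracy-Requirement} supplying the accuracy condition, Definition~\ref{Sample-Complexity-Requirement} supplying the polynomial bound on $\bot$'s, and multiplication by the ratio $r$ to pass from virtual to maximum complexity) is precisely that implied argument, matches the paper's level of rigor while being more explicit than the source, and is consistent with the remark since $r \cdot \mathcal{O}(\frac{1}{R}\ln\frac{1}{\delta}) = \mathcal{O}(\frac{r}{R}\ln\frac{1}{\delta})$.
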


\begin{remark}
PRESTIGE can be viewed as a curriculum KWIK algorithm with a KWIK bound $\mathcal{O}(\frac{r}{R}\ln\frac{1}{\delta})$.
\end{remark}

\section{Numerical Experiments}\label{Experiments}
In this section, we conduct experiments to verify the robustness of PRESTIGE. Meanwhile, we also explore the effectiveness of mini-batch PRESTIGE, and the efficacy of PRESTIGE under different losses.


\subsection{Experimental Setup}\label{Experimental Testbed and Setup}
\begin{table}
\caption{Datasets used in this paper.}
\label{datasets}
\begin{center}
\scalebox{1}{
\begin{tabular}{llll}\hline
\multicolumn{1}{c}{\bf DATA SET}  &\multicolumn{1}{c}{\bf TRAINING PTS.} &\multicolumn{1}{c}{\bf TESTING PTS.}
&\multicolumn{1}{c}{\bf FEATURES.}\\ \hline
\textit{A7A} &16,100 &16,461 &123\\
\textit{MNIST38} &450,000 &97,570 &784\\
\textit{SUSY} &4,000,000 &1,000,000 &18\\
\textit{REAL-SIM} &57,847 &14,462 &20,985\\ \hline\hline
\textit{Cyberbully} &100,000 &20,000 &4,800\\
\textit{Diabetes} &17,288 &5,327 &4,090\\ \hline
\end{tabular}}
\end{center}
\end{table}
\subsubsection{Baselines}
There are three sets of baselines for different purposes. The first set consists of vanilla SGD and DJW, where DJW denotes Duchi-Jordan-Wainwright's model using private sampling \cite{duchi2013local-focs}. When we compare them with PRESTIGE, we can verify the robustness of PRESTIGE under the privacy preservation. Note that, vanilla SGD does not preserve any privacy, and cannot be directly used in any sensitive data. However, the performance of vanilla SGD should be better than that of privacy-preserving stochastic optimizations, as these optimizations inject the noise more or less. The second category consists of mini-batch SGD and mini-batch DJW. When we compare them with mini-batch PRESTIGE, we can further verify the effectiveness of mini-batch trick. The third category consists of vanilla SGD and DJW under different losses (e.g., a) convex and smooth loss; b) convex and non-smooth loss; c) non-convex and smooth loss; d) non-convex and non-smooth loss). When we compare them with PRESTIGE under different losses, we explore which loss benefits the robustness.
\subsubsection{Parameters\&Metrics}
For all stochastic methods, the parameter $\lambda$ is selected using $10$-fold cross validation (CV) in the range of \{$10^{-3},\cdots,10^{3}$\}, the maximum number of epochs $T_{max}$ is set to $10$ by Theorem~\ref{LowerBound}, and the primal variable $\mathbf{w}$ is initialized randomly. To preserve the privacy, we follow the settings from \cite{song2013stochastic,zhang2016privtree}, and set the total privacy budget \mbox{$\varepsilon \times n\times T_{max}$} in the range of \{$0.8,1.0,1.2,1.4,1.6$\}. For PRESTIGE, we set $\varepsilon = \varepsilon_r + \varepsilon_s$, where \mbox{$\varepsilon_r:\varepsilon_s = 1:4$}. The step size $\mu$ is empirically set to $1$. We repeat all experiments $20$ times. Then each plot is averaged across $20$ results. Since this paper aims to introduce a private but robust version of SGD mechanism, we evaluate its robustness by testing error rate with standard deviation.

\subsubsection{Miscellaneous}
There are three points to be noted that: (1) In the first and second sets of baselines, the default loss is the hinge loss. In the third set of baselines, we choose four different losses: logistic loss (convex and smooth), hinge loss (convex and non-smooth), Gompertz loss \cite{han2016convergence} (non-convex and smooth), and ramp loss \cite{collobert2006trading} (non-convex and non-smooth). For Gompertz loss, the parameter $c^*$ is set to~$2$. For ramp loss, the parameter $s^*$ is chosen in the range of $[-2,0]$. (2) PRESTIGE is to improve the robustness of DJW. According to the idea of control variable, the fair comparison should be between DJW and the curriculum version of DJW (PRESTIGE). (3) Experiments are implemented by Python on a cluster node with a 2.40GHz CPU and 32GB memory.

\subsection{Empirical Study}
\begin{figure}[!tp]
\center
\begin{tabular}{c|c}
\includegraphics[width=0.5\textwidth]{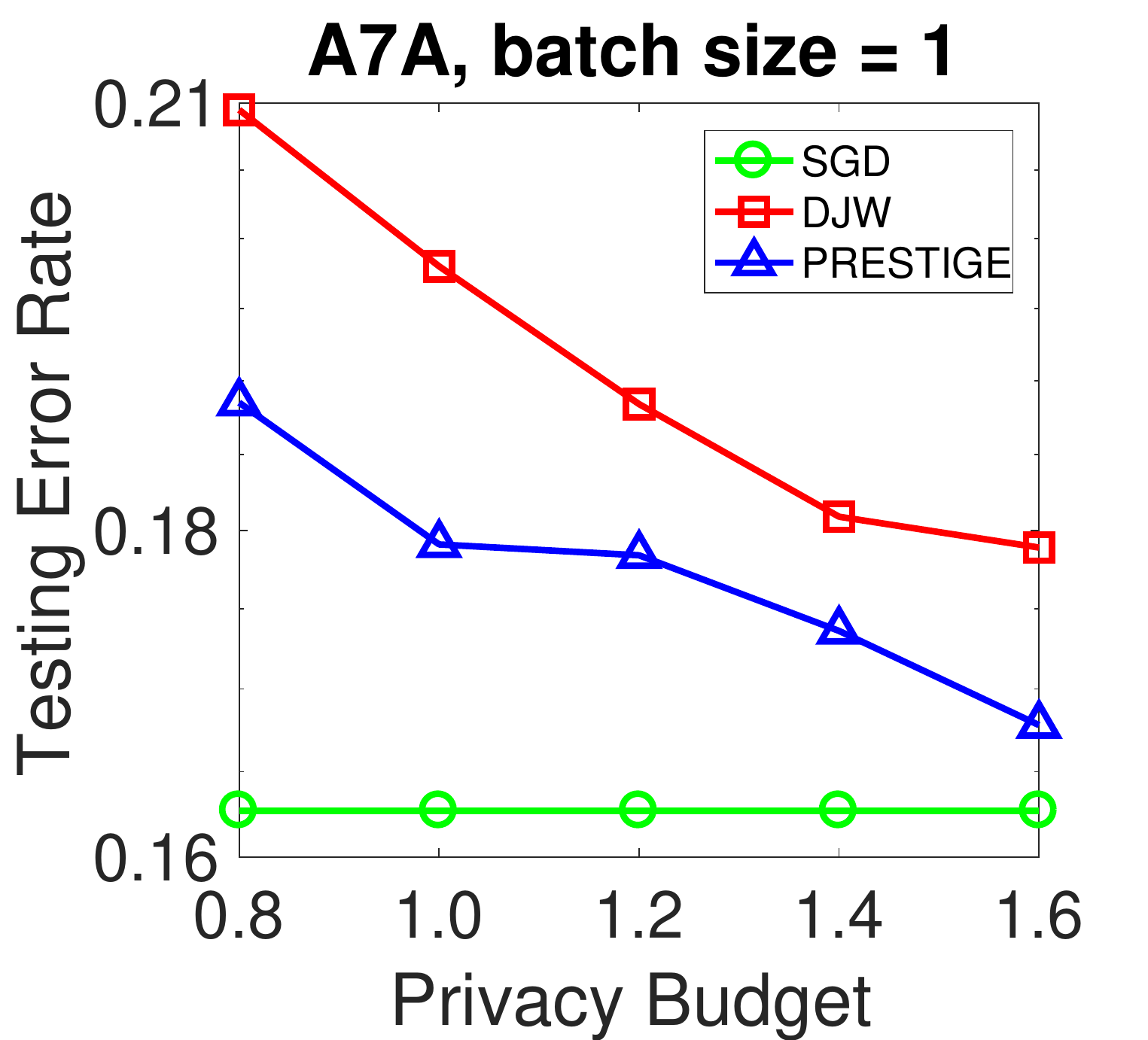} &
\includegraphics[width=0.5\textwidth]{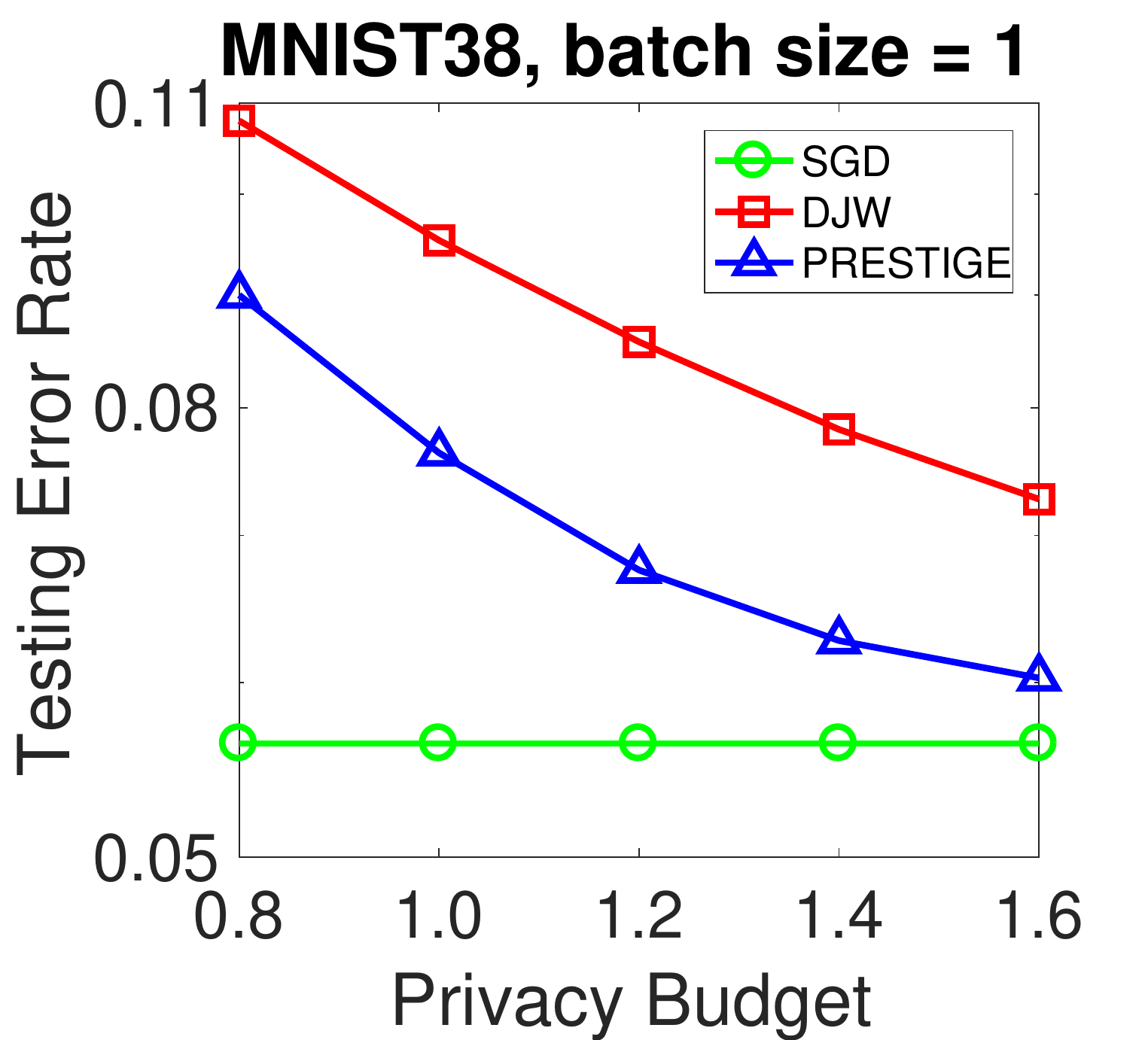} \\
\includegraphics[width=0.5\textwidth]{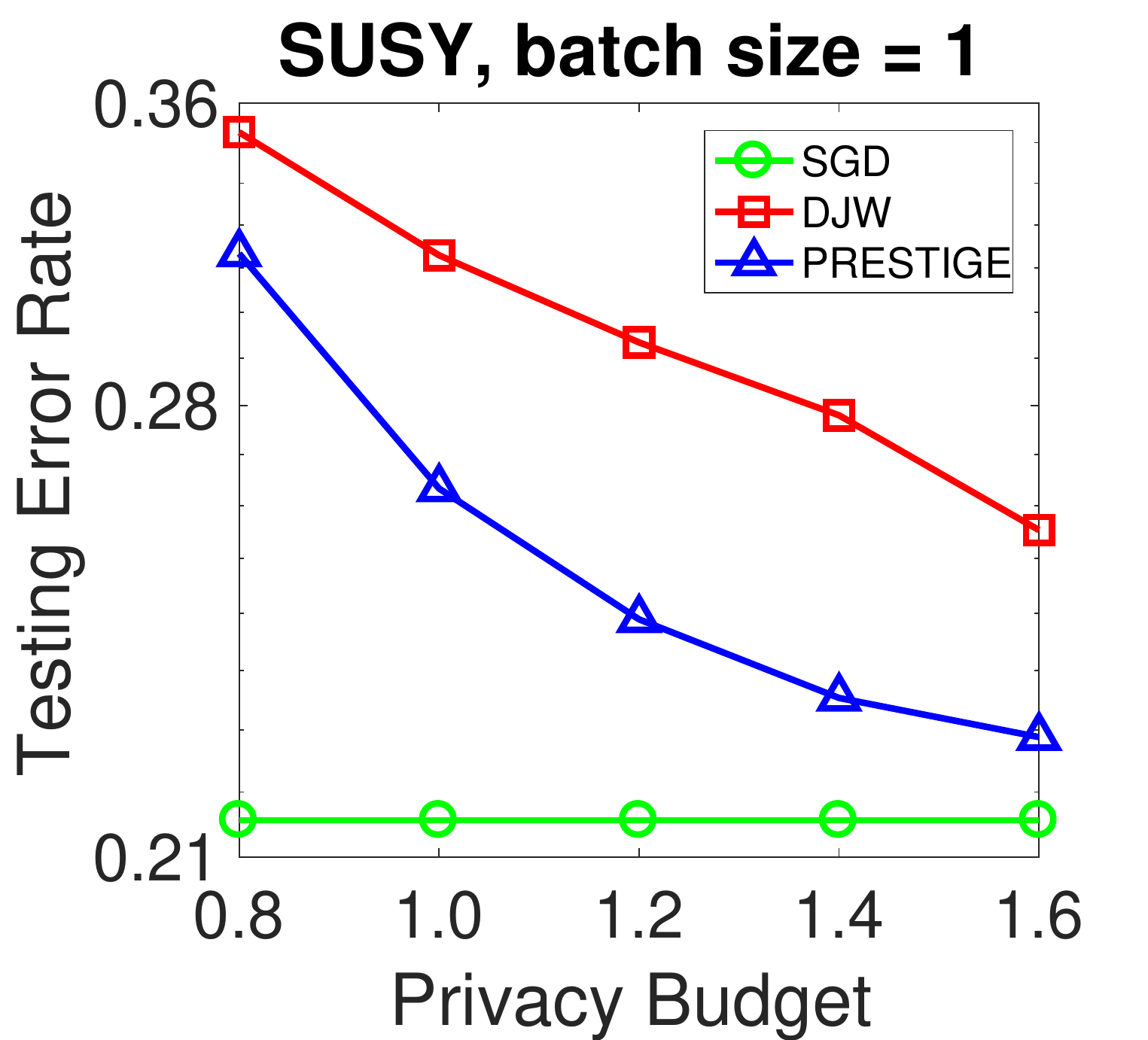} &
\includegraphics[width=0.5\textwidth]{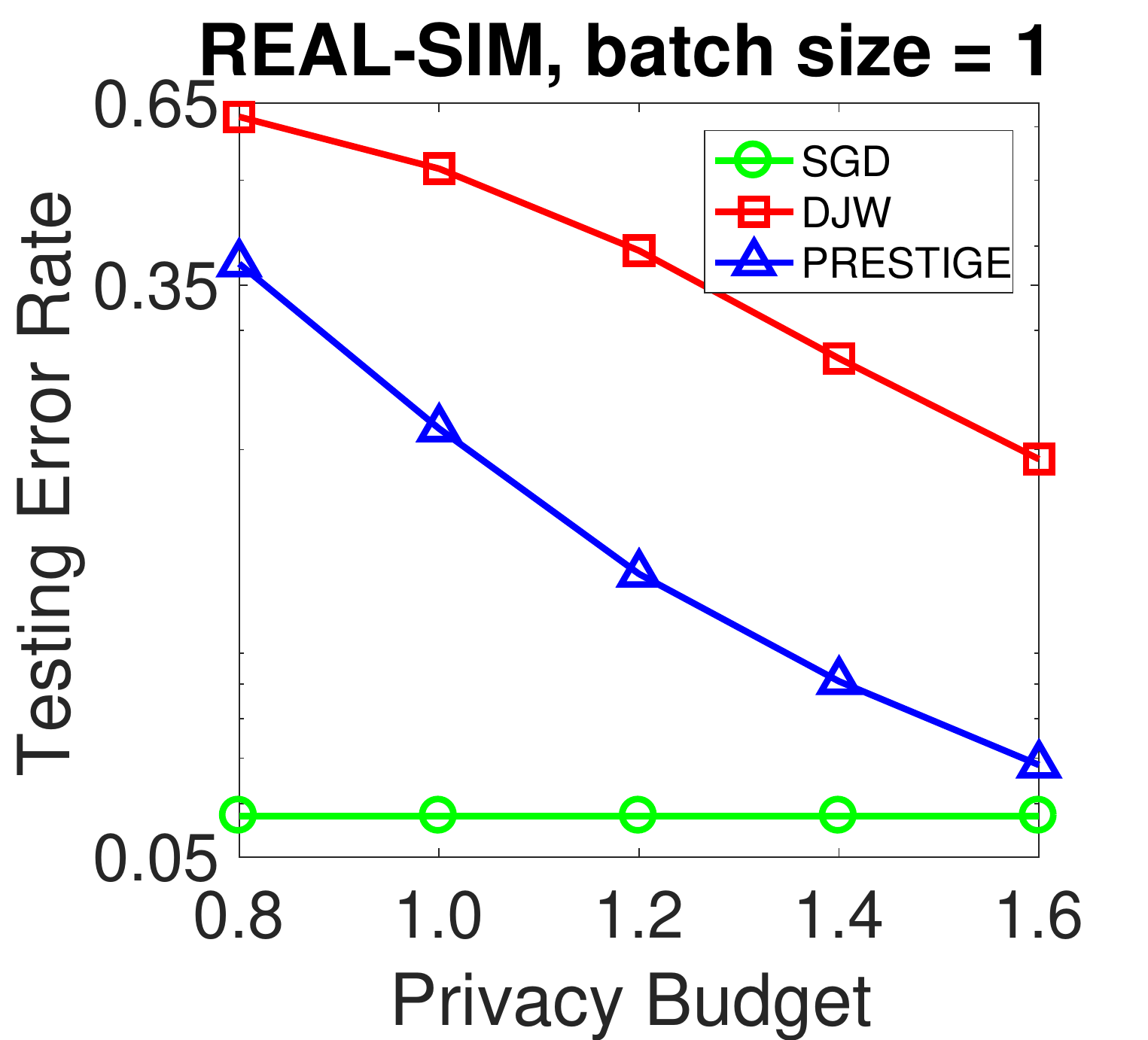}
\end{tabular}
\caption{To verify the \emph{robustness of PRESTIGE}, we compare PRESTIGE with vanilla SGD and DJW. We provide the testing error rate with the privacy budget $\varepsilon$ on four UCI datasets: small-scale \textit{A7A}, middle-scale \textit{MNIST38}, large-scale \textit{SUSY} and high-dimensional \textit{REAL-SIM}.}
\label{Robust-UCI}
\end{figure}

\begin{figure}[!tp]
\center
\begin{tabular}{c|c}
\includegraphics[width=0.5\textwidth]{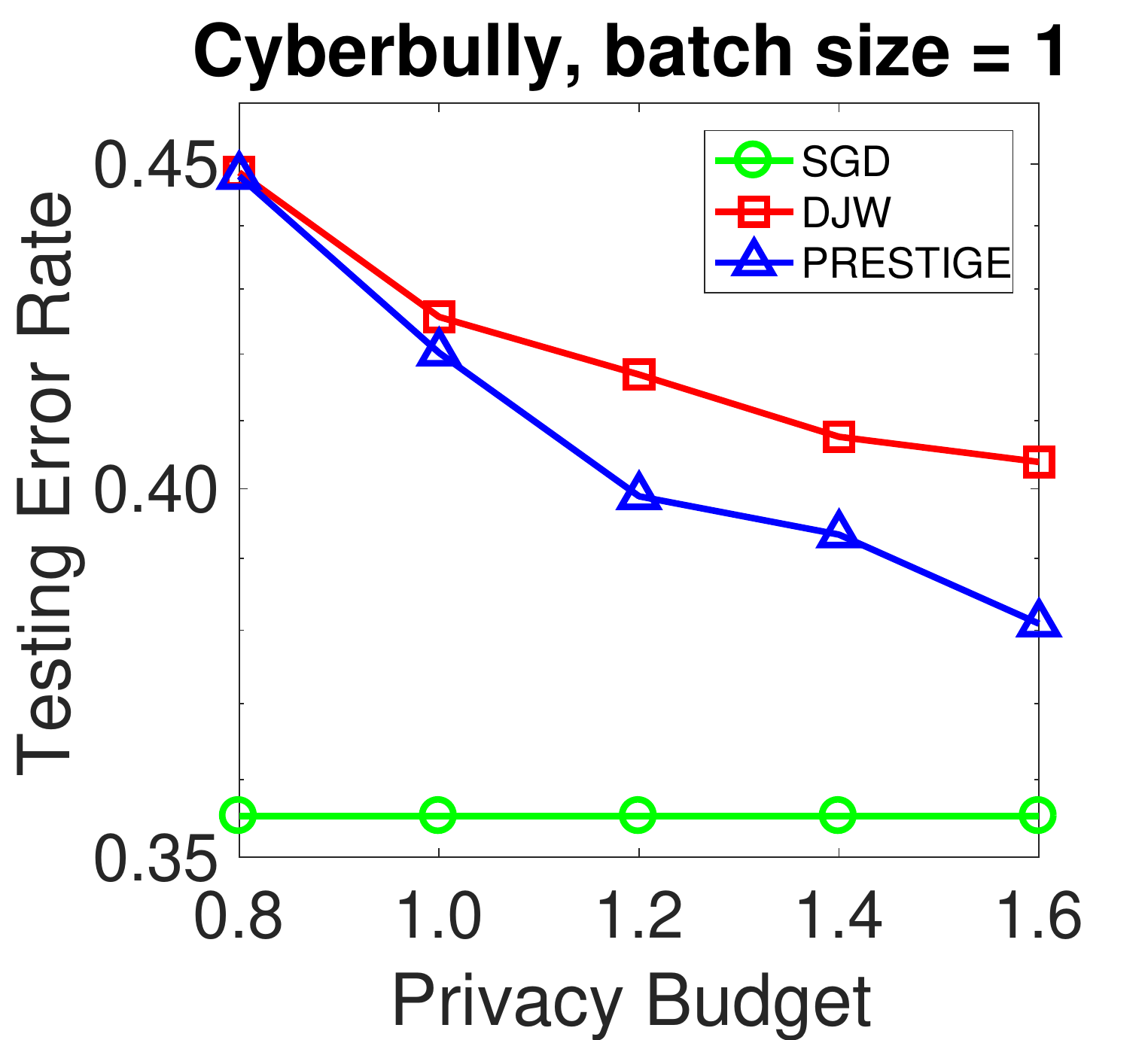} &
\includegraphics[width=0.5\textwidth]{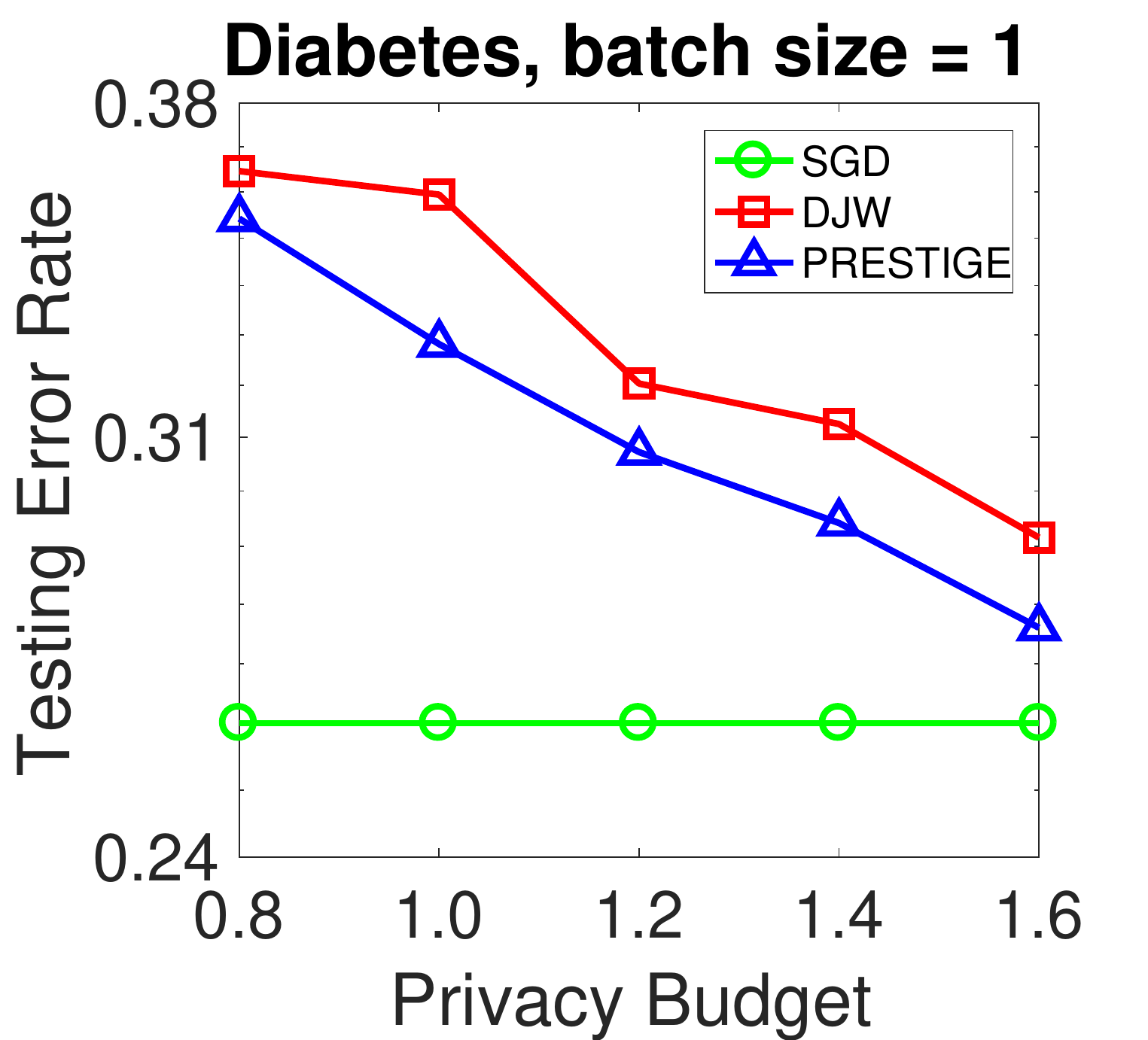}
\end{tabular}
\caption{To verify the \emph{robustness of PRESTIGE} in real-world situations, we compare PRESTIGE with vanilla SGD and DJW. We provide the testing error rate with the privacy budget $\varepsilon$ on two real-world datasets: \textit{Cyberbully} detection (social networks) and \textit{Diabetes} prediction (healthcare).}
\label{Robust-Real}
\end{figure}

\begin{figure}[!tp]
\center
\begin{tabular}{c|c}
\includegraphics[width=0.5\textwidth]{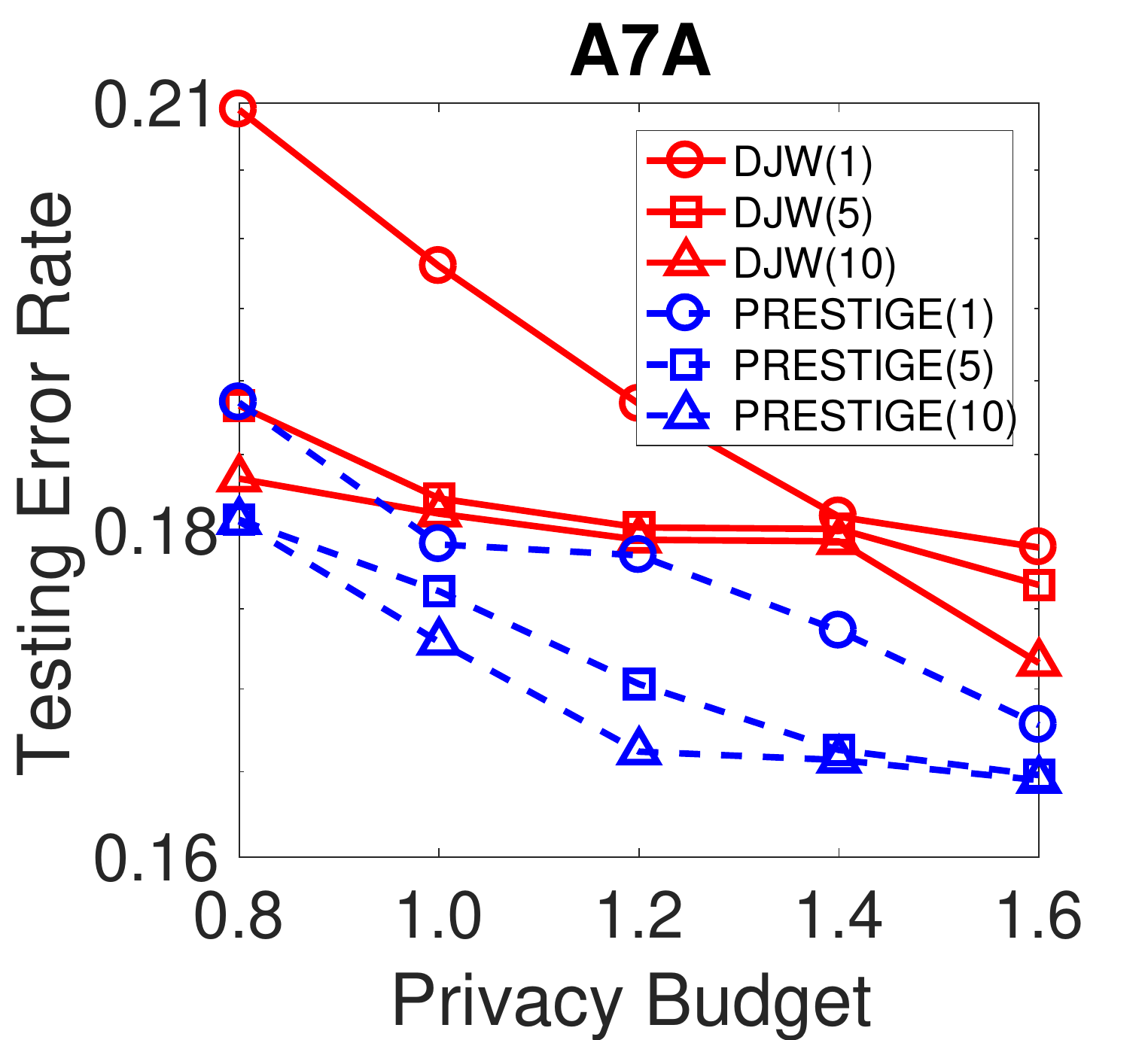} &
\includegraphics[width=0.5\textwidth]{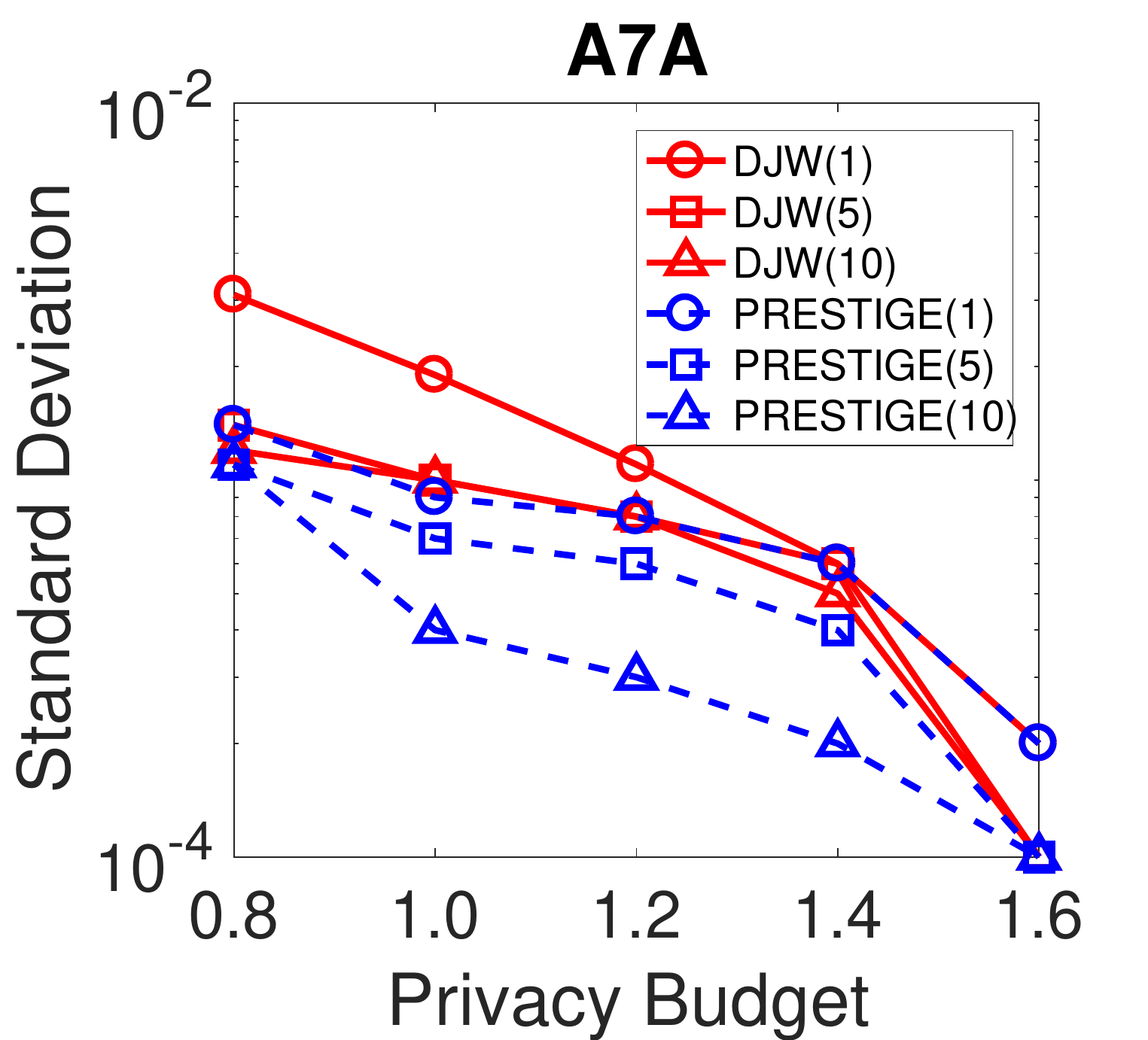} \\
\includegraphics[width=0.5\textwidth]{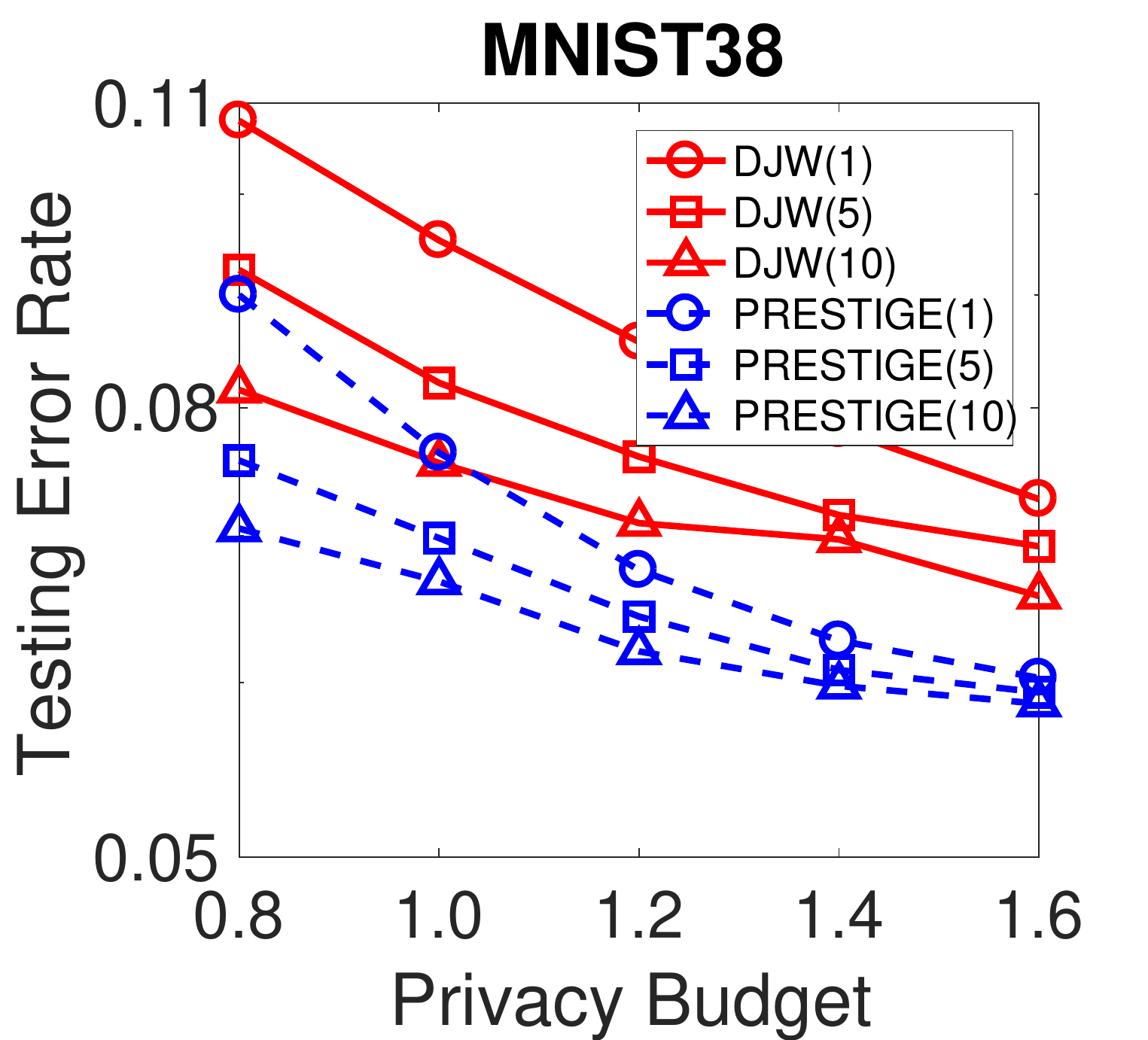} &
\includegraphics[width=0.5\textwidth]{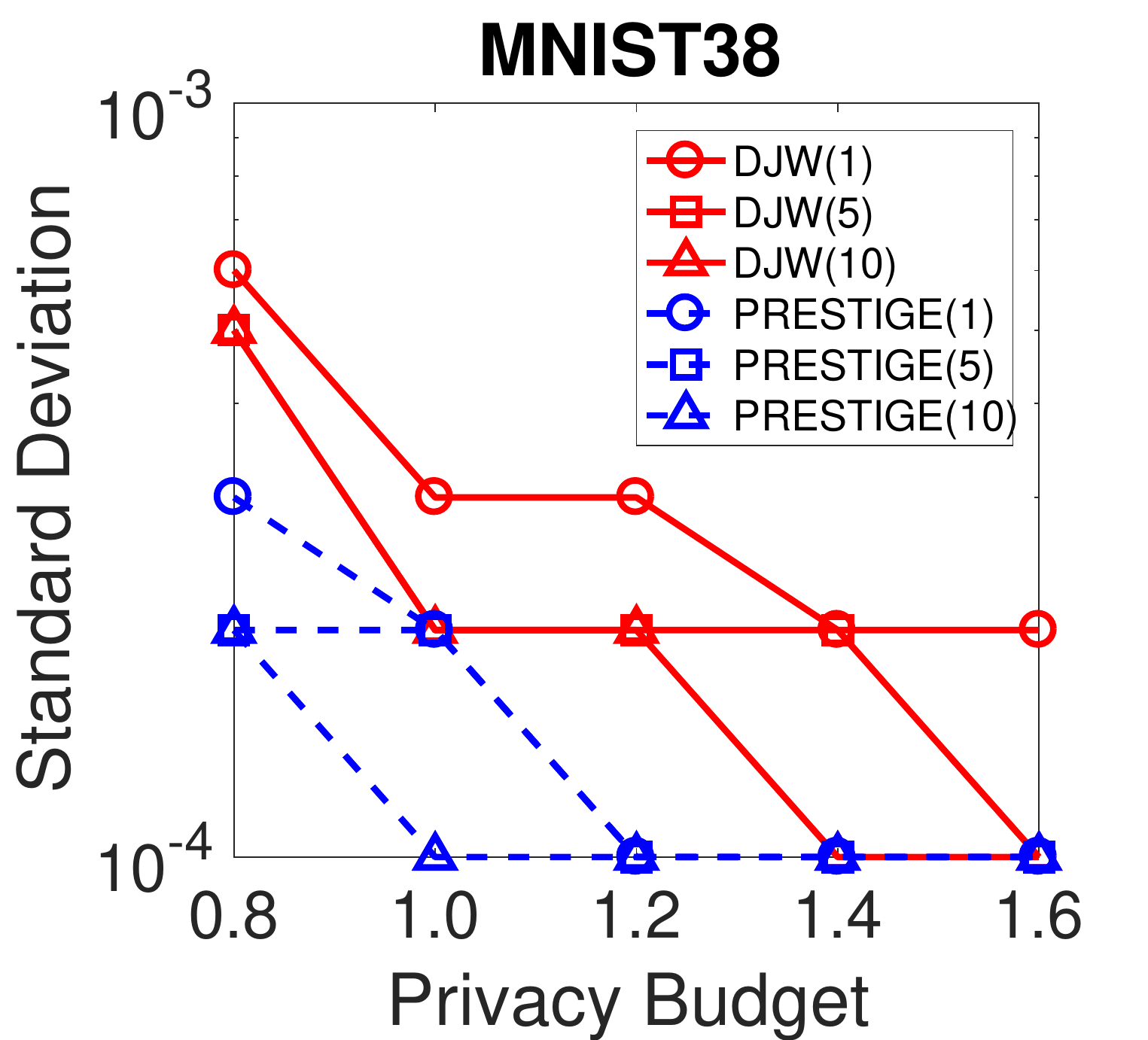} \\
\includegraphics[width=0.5\textwidth]{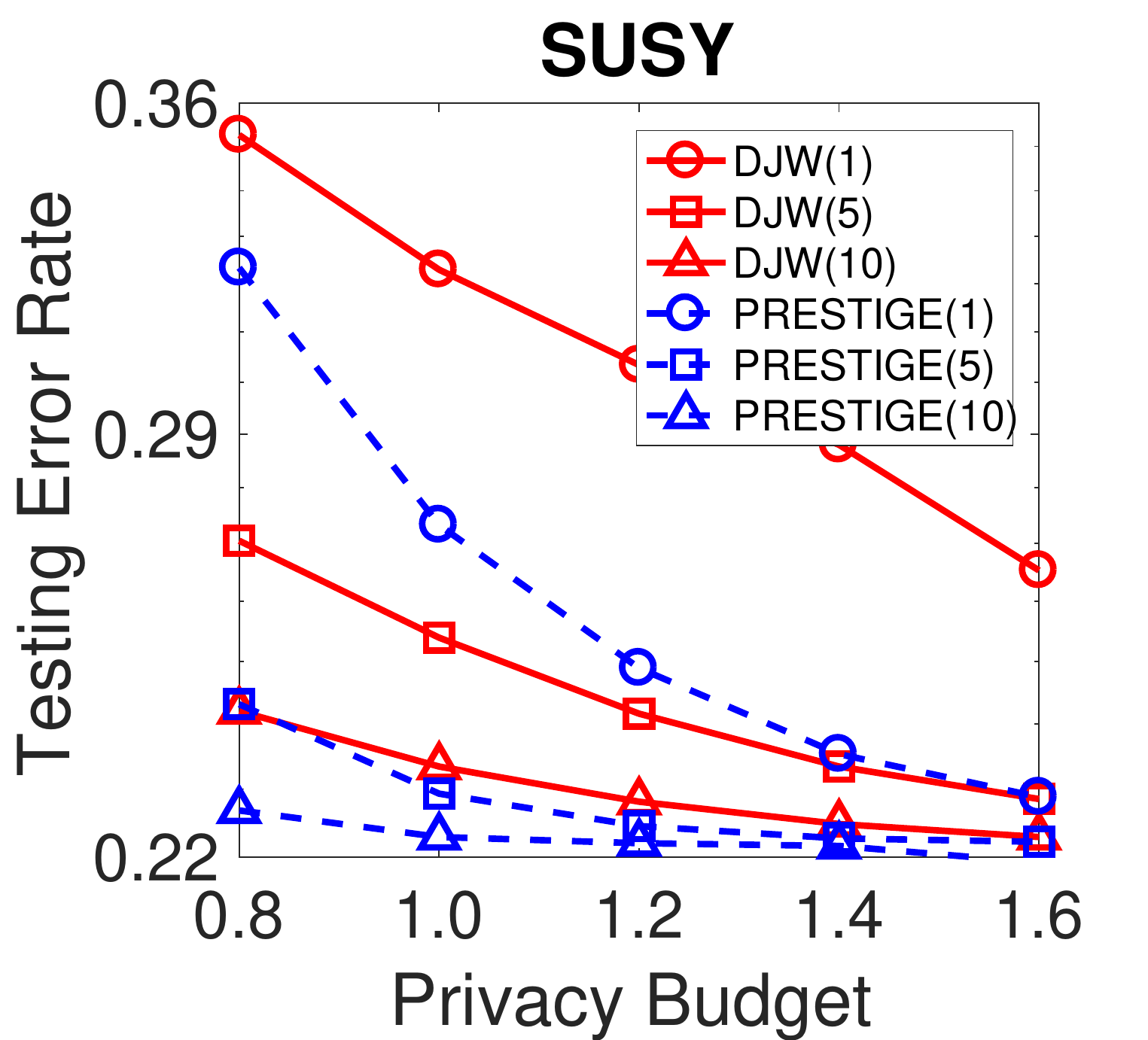} &
\includegraphics[width=0.5\textwidth]{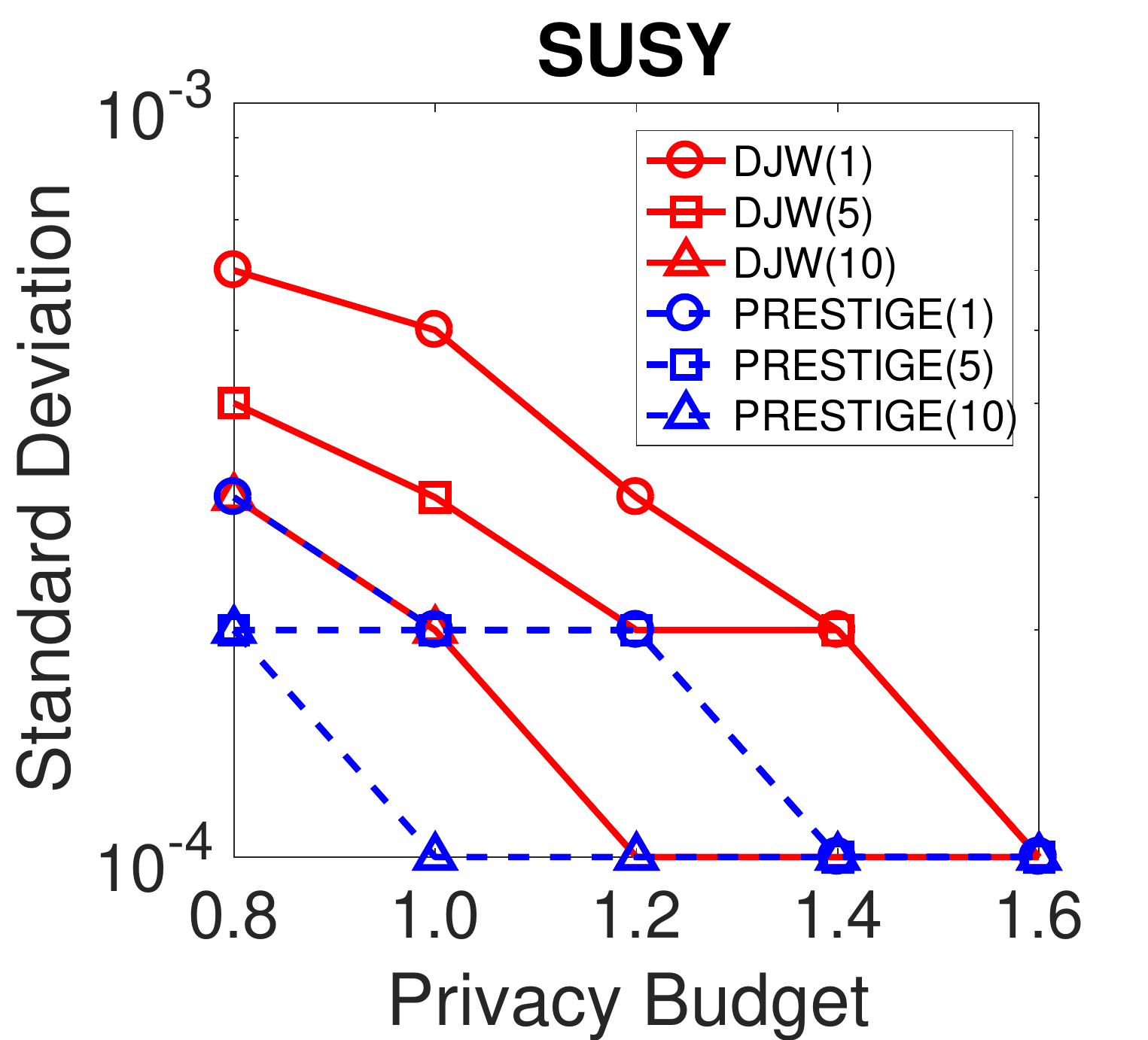} \\
\includegraphics[width=0.5\textwidth]{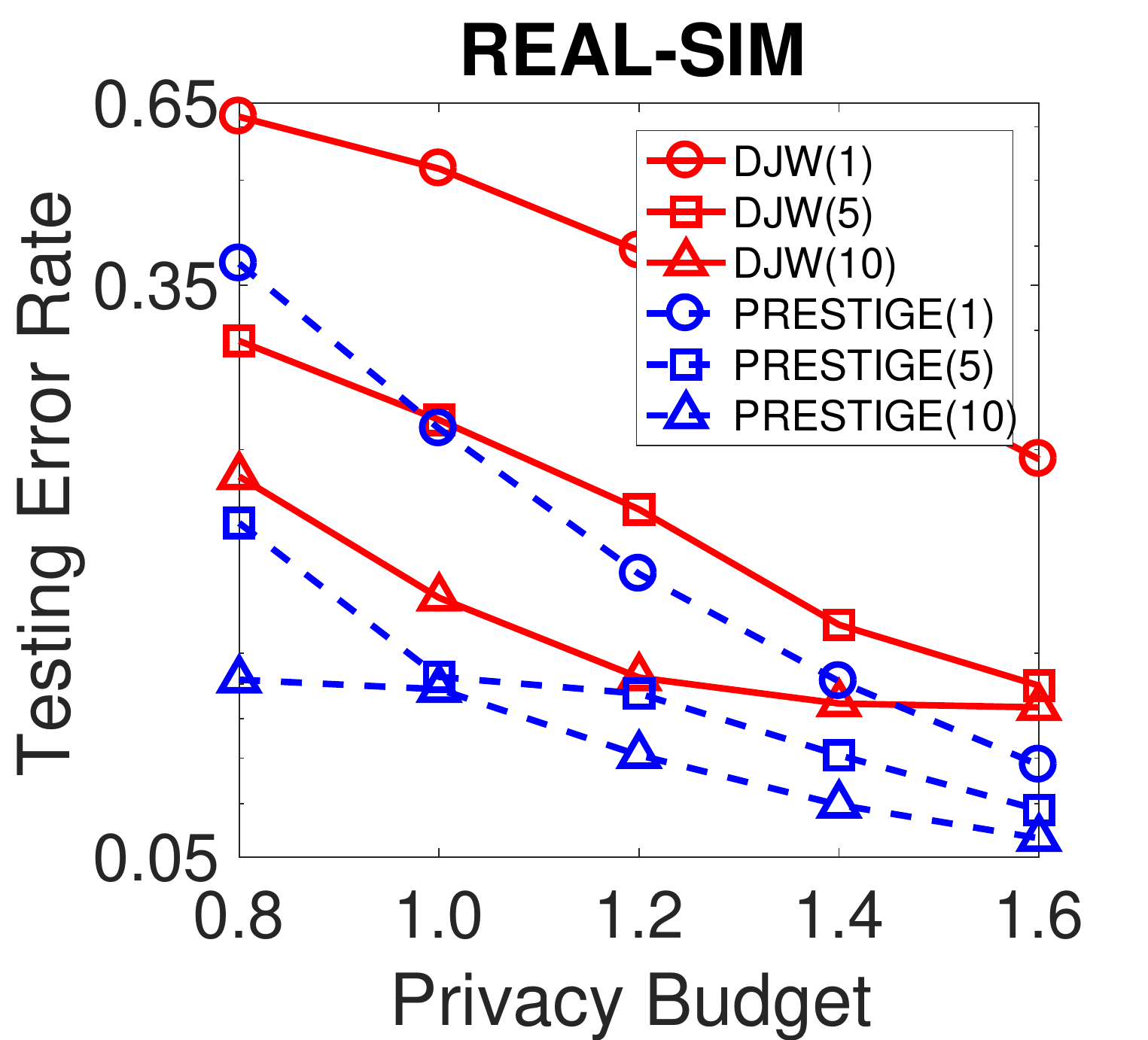} &
\includegraphics[width=0.5\textwidth]{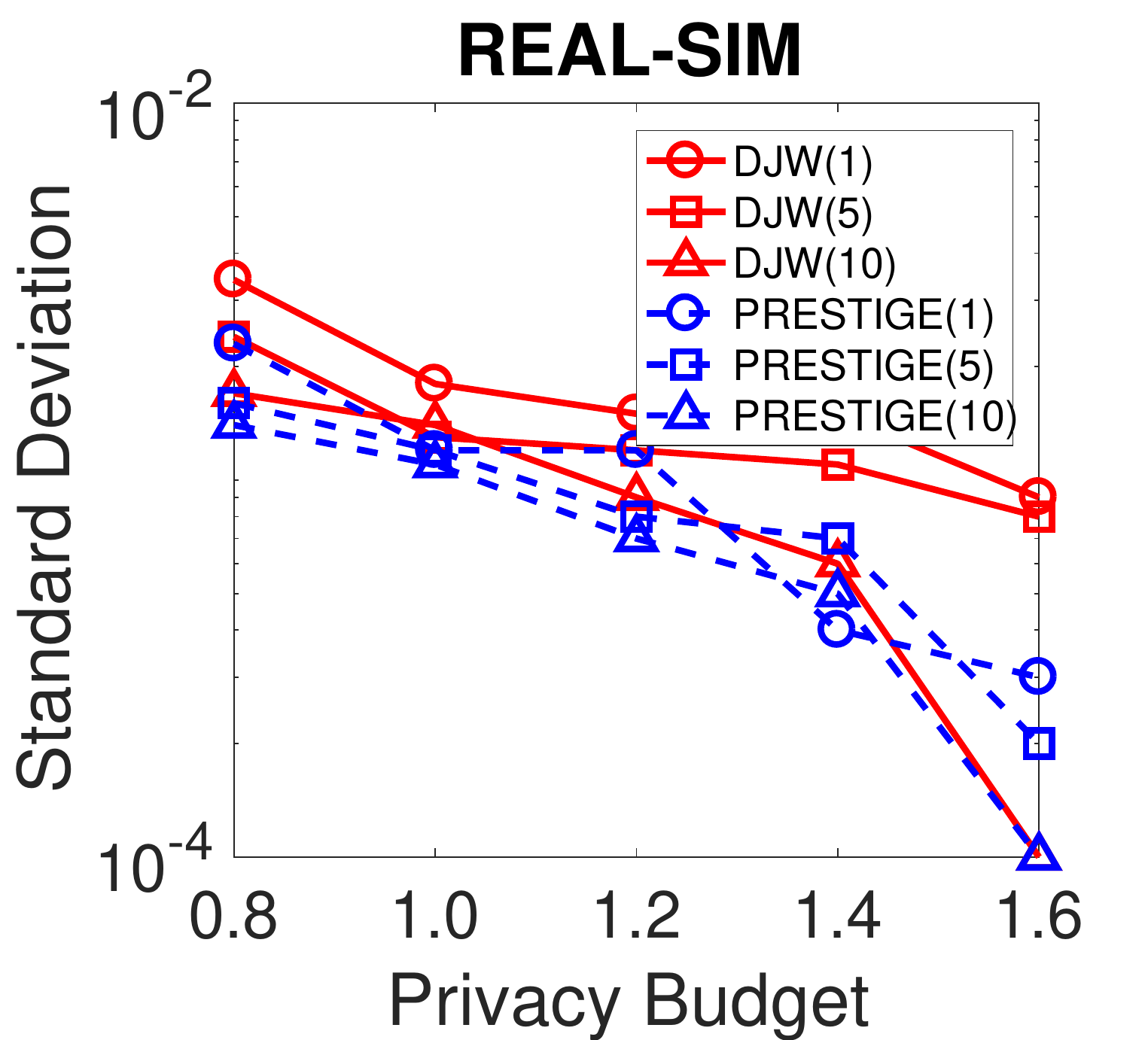}
\end{tabular}
\caption{To verify the \emph{effectiveness of mini-batch trick}, we compare the mini-batch version of PRESTIGE and DJW, where the batch size (listed in the parentheses) is $1$, $5$ and $10$ respectively. We provide the testing error rate and standard deviation with the privacy budget $\varepsilon$ on four UCI datasets.}
\label{Effectiveness-MiniBatch-UCI}
\end{figure}

\begin{figure}[!tp]
\center
\begin{tabular}{c|c}
\includegraphics[width=0.5\textwidth]{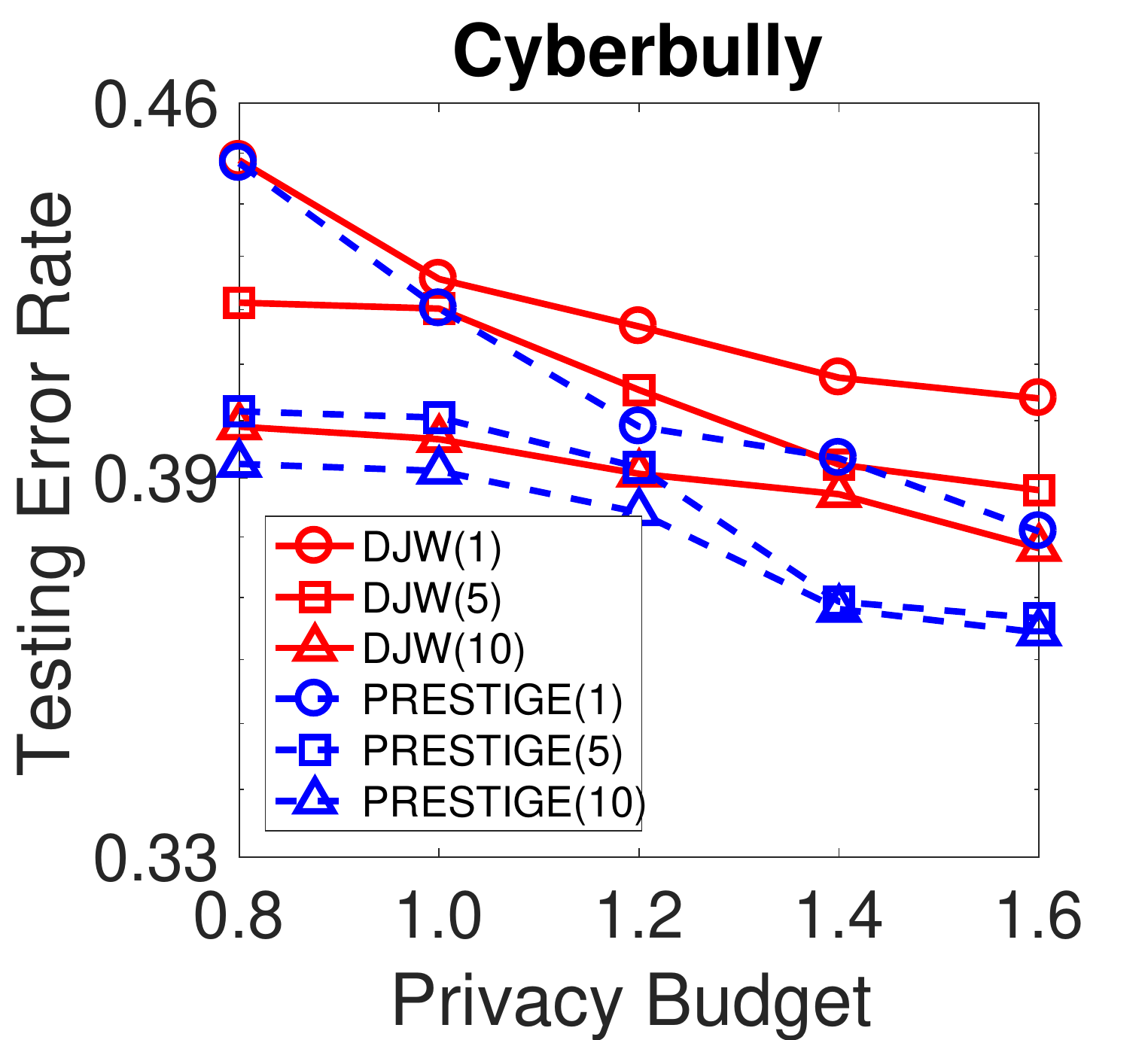} &
\includegraphics[width=0.5\textwidth]{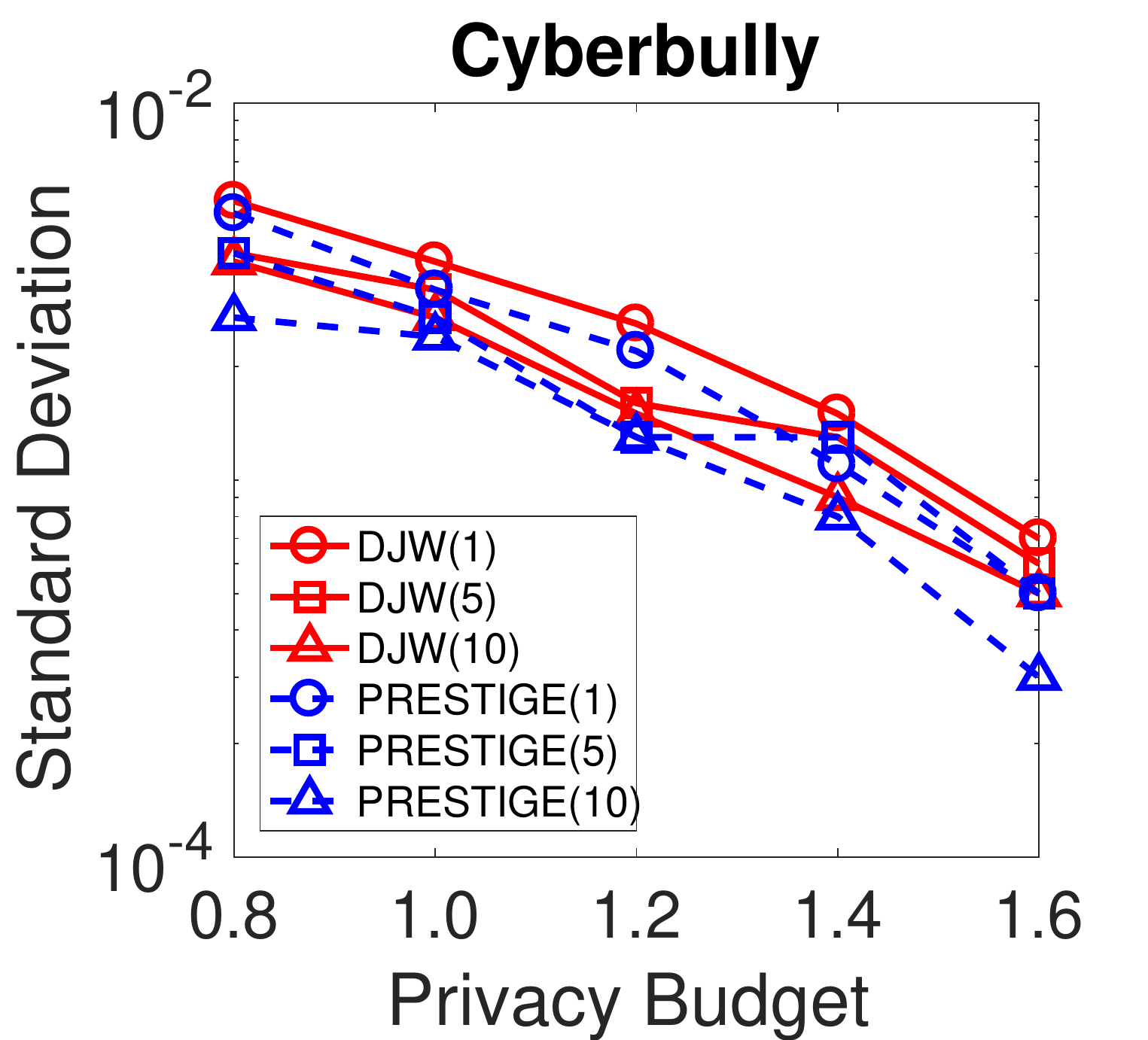} \\
\includegraphics[width=0.5\textwidth]{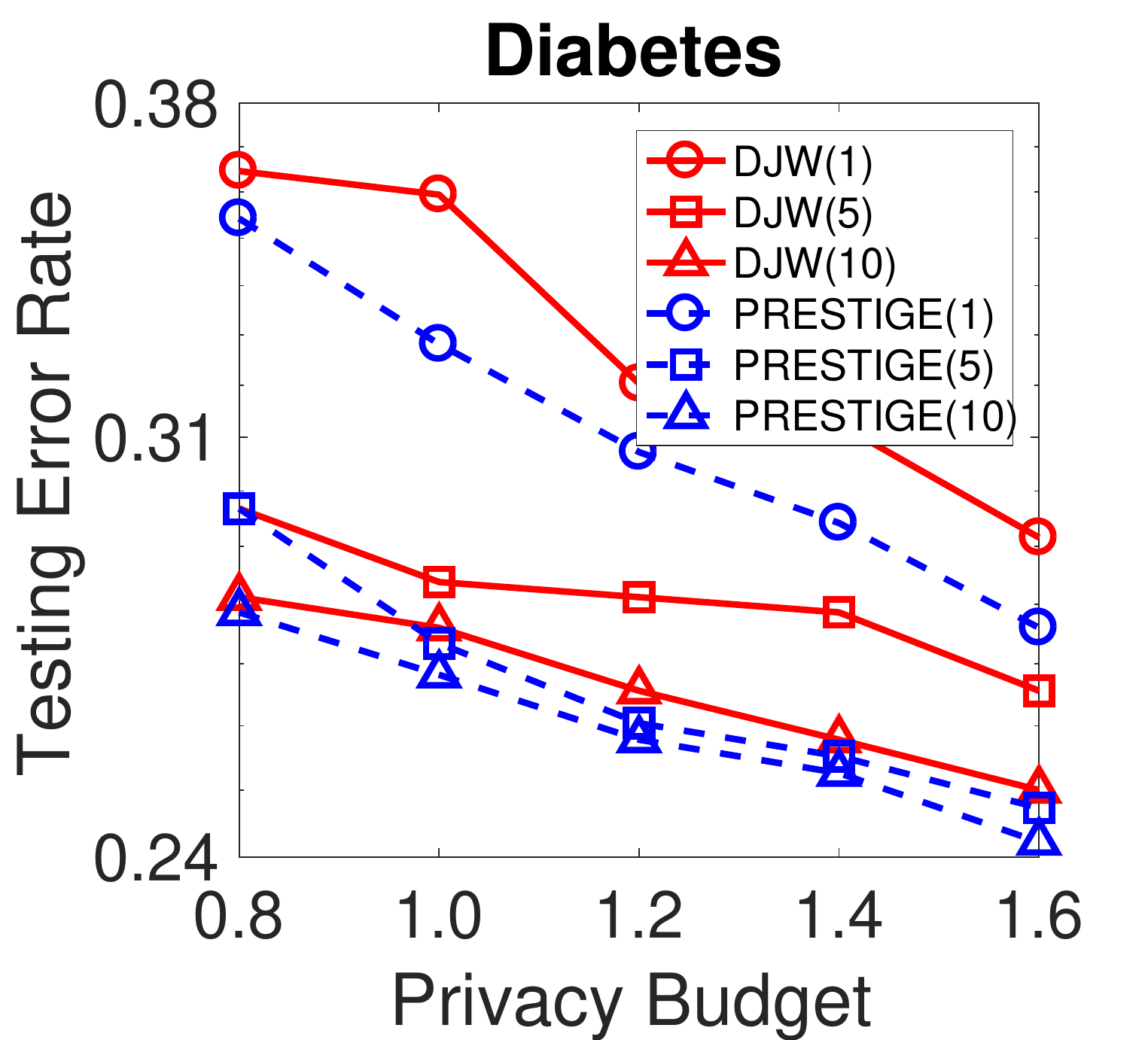} &
\includegraphics[width=0.5\textwidth]{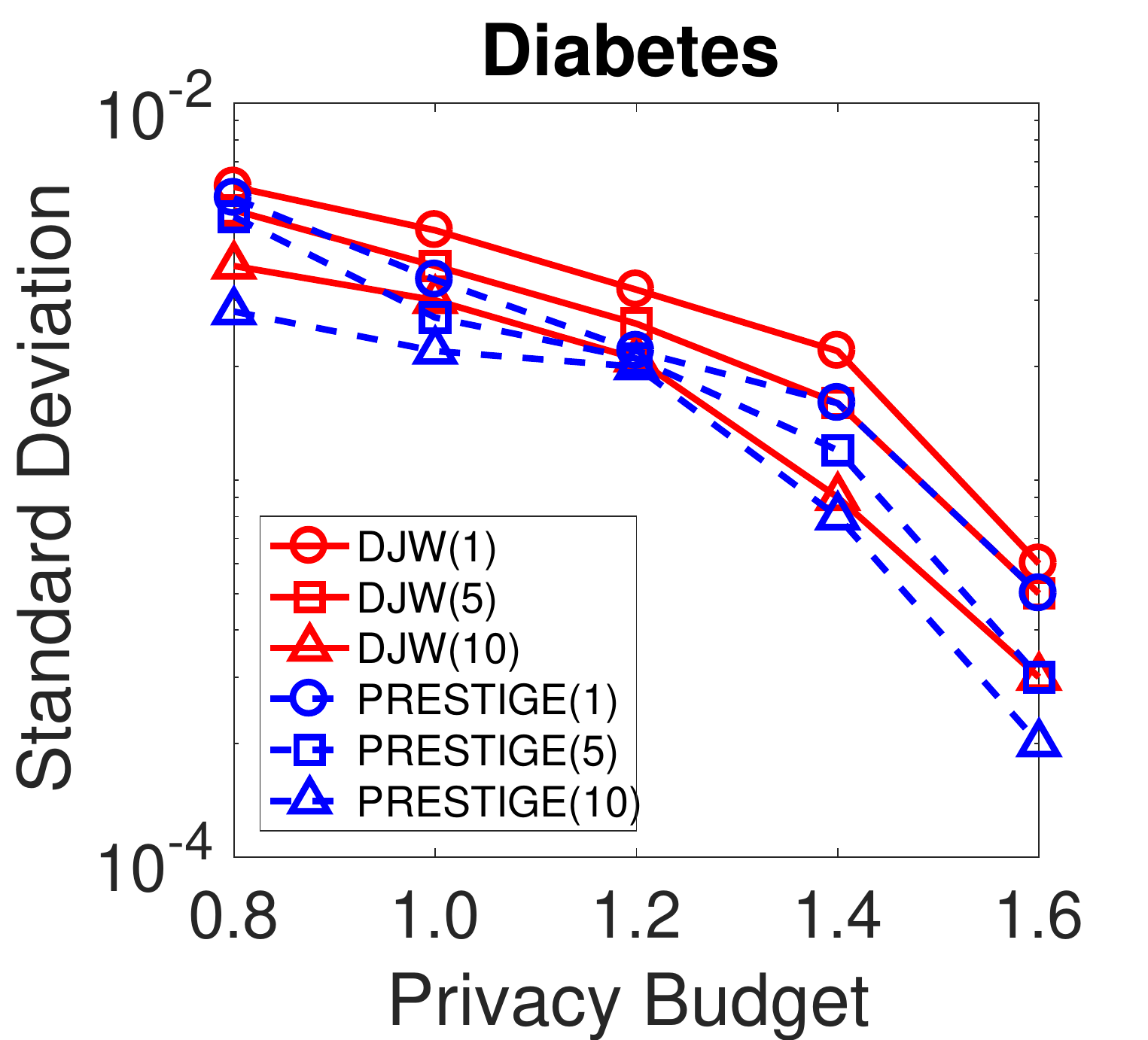}
\end{tabular}
\caption{To verify the \emph{effectiveness of mini-batch trick} in real-world situations, we compare the mini-batch version of PRESTIGE and DJW, where the batch size (listed in the parentheses) is $1$, $5$ and $10$ respectively. We provide the testing error rate and standard deviation with the privacy budget $\varepsilon$ on two real-world datasets.}
\label{Effectiveness-MiniBatch-Real}
\end{figure}

Before delving into empirical results, we first introduce two strands of experimental datasets. One strand comes from UCI datasets \cite{chang2011libsvm}, which include small-scale \textit{A7A}, middle-scale \textit{MNIST38}, large-scale \textit{SUSY} and high-dimensional \textit{REAL-SIM}. We use these datasets to verify the proposed PRESTIGE, since these datasets are representative in the scalability and dimensionality. The other strand comes from two large-scale sensitive datasets in the real world, namely social networks \cite{yuan2010personalized} and healthcare \cite{gelfand2012privacy}. As these domains are highly related to individuals in real-world situations, they are very suitable to verify the practicality of PRESTIGE.

In social networks, we use \textit{Cyberbully} detection dataset provided by Australia Research Alliance for Children \& Youth (\url{https://www.aracy.org.au/}). Cyberbully detection dataset is provided by our research collaborator from Australia Research Alliance for Chidren \& Youth (ARACY). The raw dataset consists of $120,000$ twitter posts. After being preprocessed by skip-thought vectors \footnote{\url{https://github.com/tensorflow/models/tree/master/research/skip_thoughts}}, each post is represented as a vector with $4,800$ dimension.

In healthcare, we employ \textit{Diabetes} prediction dataset extracted from MarketScan Research Databases (\url{https://truvenhealth.com/}). Note that, for imbalanced datasets, the ratio between positive and negative samples is rescaled around $1:1$. The statistics for both UCI and real-world datasets are summarized in Table~\ref{datasets}.

\subsubsection{Robustness of PRESTIGE}\label{Performance of POSTAL}
To verify the robustness of PRESTIGE, in Fig.~\ref{Robust-UCI} and \ref{Robust-Real}, we provide the performance (the testing error rate (TER) with the privacy budget) comparison among PRESTIGE, vanilla SGD and DJW on four UCI datasets and two real-world datasets. Inspired by the idea of control variable, we leverage the same hinge loss to realize PRESTIGE, vanilla SGD and DJW.

We observe that, under the same hinge loss, PRESTIGE obviously achieves the lower TERs than DJW along the entire x-axis, which demonstrates the robustness of PRESTIGE. By our analysis, the good results of PRESTIGE should be due to the mechanism of curriculum learning, which learns a reliable model from the ordered label sequence. Meanwhile, with the increase of privacy budget, the TERs of both PRESTIGE and DJW decrease continuously. According to private sampling strategy, $\frac{e^{\varepsilon}}{e^{\varepsilon} + 1}$ will approach to $1$ with the increase of privacy budget. It means that, with the large probability, $Q = 1$ and the rotation angle $\langle G_p, g\rangle$ (from gradient $g$ to private gradient $G_p$) will be restricted under $90^{\circ}$ due to line $15-16$ in Algorithm~\ref{algorithm-PRESTIGE}. Equivalently, the noise injection will decrease, which lowers the TERs of both methods. Note that, the TERs of both PRESTIGE and DJW are worse than the TER of vanilla SGD, since all privacy-preserving algorithms essentially equals to noise injection. However, vanilla SGD does not preserve any privacy, and cannot be directly used in sensitive data.

\subsubsection{Effectiveness of Mini-Batch Trick}\label{Fast-Convergence} We further verify the effectiveness of mini-batch trick on the same datasets, we compare the mini-batch version of PRESTIGE and DJW, where the batch size is $1$, $5$ and $10$ respectively. We simultaneously provide the TER and standard deviation (SD) under the privacy budget. In Fig.~\ref{Effectiveness-MiniBatch-UCI} and \ref{Effectiveness-MiniBatch-Real}, we have three-fold observation with an increase in the privacy budget. Firstly, the TERs and SDs of mini-batch PRESTIGE and mini-batch DJW decrease continuously with the increase of privacy budget. Secondly, under the same method, the larger the batch size is, the lower the TERs and SDs are. Therefore, it is effective to leverage the mini-batch trick for the better robustness. Lastly, under the same batch size, PRESTIGE acquires the lower TERs and SDs than DJW, which again validates the robustness of PRESTIGE.

There are two points to be noted that: 1) it is a common sense that the increase of batch size will decrease the convergence rate of stochastic optimization \cite{li2014efficient}. Therefore, we will explore the optimal tradeoff between convergence and robustness of PRESTIGE in the near future. 2) In Fig.~\ref{Effectiveness-MiniBatch-UCI} and \ref{Effectiveness-MiniBatch-Real}, the advantages of PRESTIGE in REAL-SIM, Cyberbully and Diabetes are not as significant as that in other datasets. The reason may be that, the dimension of these datasets are higher, and the noise injection is more adverse for training robust models. Thus, it becomes more difficult for curriculum learning to solve the issue of robust degeneration.

\subsubsection{Efficacy of Different Losses}
Lastly, we verify the efficacy of different losses on the same datasets. Table~\ref{loss-comparison} presents the testing error rate (TER) and the standard deviation (SD) under the privacy budget equal to $1$. According to the results, we derive the following conclusions. For DJW, the robustness (TER$\pm$SD) of logistic loss outperforms that of another three losses in most cases, since logistic loss is convex and smooth, and it can be optimized robustly. For PRESTIGE, the robustness of Gompertz loss has a marginal advantage over that of logistic loss. The reason may be that, private curriculum in PRESTIGE introduces some noisy labels, and Gompertz loss can reduce robustness degeneration caused by these noisy labels on each update of the primal variable \cite{han2016convergence}. It is noted that, for DJW on the high-dimensional \textit{REAL-SIM}, the robustness of Gompertz loss surpasses that of logistic loss. We believe that, for high-dimensional datasets, the robustness is easily degenerated by ``noise injection'', which comes from private sampling. Due to \cite{han2016convergence}, Gompertz loss can reduce robustness degeneration caused by such noise.

\begin{table*}[!tp]
\caption{Efficacy of different losses on both UCI and real-world datasets. In each dataset, results of DJW are in the first row, while results of PRESTIGE are in the second row.}
\centering
\scalebox{1}{
\begin{tabular}{c|c|c|c|c}
	\hline
	 Dataset &  Hinge Loss & Gompertz Loss & Logistic Loss & Ramp Loss \\ \hline
	 \textit{A7A}  & 0.1980$\pm$0.0019 & 0.1991$\pm$0.0014 & \bf{0.1892$\pm$0.0002}  &  0.2184$\pm$0.0229           \\ \cline{2-5}
	     & 0.1791$\pm$0.0009 & \bf{0.1732$\pm$0.0002}                       & 0.1738$\pm$0.0002 & 0.2126$\pm$0.0294 \\ \hline
	 \textit{MNIST38}   & 0.0953$\pm$0.0003 & 0.1066$\pm$0.0001 & \bf{0.0918$\pm$0.0002}  & 0.5005$\pm$0.0001           \\ \cline{2-5}
	    & 0.0763$\pm$0.0002 & \bf{0.0746$\pm$0.0002}                       & 0.0796$\pm$0.0002 & 0.4005$\pm$0.0001     \\ \hline
	\textit{SUSY}   & 0.3229$\pm$0.0005 & 0.4346$\pm$0.0001 & \bf{0.3199$\pm$0.0001}  & 0.4546$\pm$0.0002      \\ \cline{2-5}
        & 0.2733$\pm$0.0002 & \bf{0.2731$\pm$0.0001}                       & 0.3260$\pm$0.0055  & 0.4221$\pm$0.0007 \\ \hline
	\textit{REAL-SIM}   & 0.5202$\pm$0.0018 & \bf{0.4839$\pm$0.0014} & 0.5819$\pm$0.0007  & 0.5101$\pm$0.0019  \\ \cline{2-5}
        &   0.2148$\pm$0.0012   & \bf{0.2089$\pm$0.0015}                       & 0.2829$\pm$0.0029   & 0.2095$\pm$0.0029    \\ \hline
	\textit{Cyberbully} & 0.4257$\pm$0.0038 & 0.4322$\pm$0.0015 & \bf{0.4142$\pm$0.0001}         & 0.4508$\pm$0.0005 \\ \cline{2-5}
      &   0.4202$\pm$0.0032  & \bf{0.3934$\pm$0.0011}                       & 0.3989$\pm$0.0010     & 0.4459$\pm$0.0043     \\ \hline
    \textit{Diabetes}      &   0.3594$\pm$0.0060   & 0.3485$\pm$0.0072                       & \bf{0.3464$\pm$0.0026}    &  0.3750$\pm$0.0033     \\ \cline{2-5}
            & 0.3281$\pm$0.0034 & \bf{0.3125$\pm$0.0028}                       & 0.3203$\pm$0.0082         & 0.3690$\pm$0.0075  \\ \hline
\end{tabular}}
\label{loss-comparison}
\end{table*}

\section{Conclusions}\label{Conclusions}
This paper studies a private but robust SGD mechanism called PRESTIGE for the large-scale sensitive data, which aims to solve the issues of ``privacy leakage'' and ``robustness degeneration'' simultaneously. In future, we will further explore three following aspects. (1) How to speed up the convergence rate of PRESTIGE by Nesterov's accelerated strategy \cite{nesterov2007gradient} or iterative machine teaching \cite{liu2017iterative}. (2) To achieve the optimal robustness of PRESTIGE on each dataset, how to adaptively choose the ratio between $\varepsilon_r$ and $\varepsilon_s$. (3) How to extend PRESTIGE to deep learning (Appendix F). With these extensions, PRESTIGE can be applied to more situations.

\subsubsection*{Acknowledgments.} IWT was partially supported by ARC FT130100746, LP150100671 and DP180100106. XKX was supported in part by by MOE, Singapore under grant MOE2015-T2-2-069, and by NUS, Singapore under an SUG. LC was supported by ARC DP180100966. The Titan Xp used for this research was donated by the NVIDIA Corporation.


\begin{thebibliography}{4}
\bibitem{quere2004mining} Quere, R. and Manchon, L. and Lejeune, M. and Cl{\'e}ment, O. and Pierrat, F. and Bonafoux, B. and Commes, T. and Piquemal, D. and Marti, J.: Mining SAGE data allows large-scale, sensitive screening of antisense transcript expression. Nucleic Acids Res. (2004)
\bibitem{hao2011privacy} Hao, Z. and Zhong, S. and Yu, N.: A privacy-preserving remote data integrity checking protocol with data dynamics and public verifiability. IEEE Trans. Knowl. Data Eng. (2011)
\bibitem{wu2014data} Wu, X. and Zhu, X. and Wu, G. and Ding, W.: Data mining with big data. IEEE Trans. Knowl. Data Eng. (2014)
\bibitem{bottou2016optimization} Bottou, L. and Curtis, F. and Nocedal, J.: Optimization methods for large-scale machine learning. arXiv:1606.04838. (2016)
\bibitem{xiong2005privacy} Xiong, H. and Steinbach, M. and Kumar, V.: Privacy leakage in multi-relational databases via pattern based semi-supervised learning. In: CIKM. (2005)
\bibitem{gelfand2012privacy} Gelfand, A.: Privacy and biomedical research: building a trust infrastructure: an exploration of data-driven and process-driven approaches to data privacy. Biomed Comput. Rev. (2012)
\bibitem{barbara1999mobile} Barbar{\'a}, D.: Mobile computing and databases-a survey. IEEE Trans. Knowl. Data Eng. (1999)
\bibitem{zhu2014mobile} Zhu, H. and Xiong, H. and Ge, Y. and Chen, E.: Mobile app recommendations with security and privacy awareness. In: KDD. (2014)
\bibitem{jiang2004cluster} Jiang, D. and Tang, C. and Zhang, A.: Cluster analysis for gene expression data: A survey. IEEE Trans. Knowl. Data Eng. (2004)
\bibitem{jiang2005interactive} Jiang, D. and Pei, J. and Zhang, A.: An interactive approach to mining gene expression data. IEEE Trans. Knowl. Data Eng. (2005)
\bibitem{wang2016learning} Wang, Y. and Lei, J. and Fienberg, S.: Stability, learnability and the sufficiency and necessity of ERM principle. J. Mach. Learn. Res. (2016)
\bibitem{cotter2011better} Cotter, A. and Shamir, O. and Srebro, N. and Sridharan, K.: Better mini-batch algorithms via accelerated gradient methods. In: NIPS. (2011)
\bibitem{mitliagkas2013memory} Mitliagkas, I. and Caramanis, C. and Jain, P.: Memory limited, streaming PCA. In: NIPS. (2013)
\bibitem{benitez2010evaluating} Benitez, K. and Malin, B.: Evaluating re-identification risks with respect to the HIPAA privacy rule. J. Am. Med. Inform. Assoc. (2010)
\bibitem{malin2011never} Malin, B. and Benitez, K. and Masys, D.: Never too old for anonymity: a statistical standard for demographic data sharing via the HIPAA privacy rule. J. Am. Med. Inform. Assoc. (2011)
\bibitem{fung2007anonymizing} Fung, B. and Wang, K. and Philip, Y.: Anonymizing classification data for privacy preservation. IEEE Trans. Knowl. Data Eng. (2007)
\bibitem{yu2014scalable} Yu, F. and Fienberg, S. and Slavkovi{\'c}, A. and Uhler, C.: Scalable privacy-preserving data sharing methodology for genome-wide association studies. J. Biomed. Inform. (2014)
\bibitem{song2013stochastic} Song, S. and Chaudhuri, K. and Sarwate, A.: Stochastic gradient descent with differentially private updates. In: GlobalSIP. (2013)
\bibitem{dwork2006calibrating} Dwork, C. and McSherry, F. and Nissim, K. and Smith, A.: Calibrating noise to sensitivity in private data analysis. In: TCC. (2006)
\bibitem{xiao2011differential} Xiao, X. and Wang, G. and Gehrke, J.: Differential privacy via wavelet transforms. IEEE Trans. Knowl. Data Eng. (2011)
\bibitem{zhu2017differentially} Zhu, T. and Li, G. and Zhou, W. and Philip, S.: Differentially private data publishing and analysis: a survey. IEEE Trans. Knowl. Data Eng. (2017)
\bibitem{warner1965randomized} Warner, S.: Randomized response: A survey technique for eliminating evasive answer bias. J. Am. Stat. Assoc. (1965)
\bibitem{duchi2013local-focs} Duchi, J. and Jordan, M. and Wainwright, M.: Local privacy and statistical minimax rates. In: FOCS. (2013)
\bibitem{kantarcioglu2004privacy} Kantarcioglu, M. and Clifton, C.: Privacy-preserving distributed mining of association rules on horizontally partitioned data. IEEE Trans. Knowl. Data Eng. (2004)
\bibitem{liu2006random} Liu, K. and Kargupta, H. and Ryan, J.: Random projection-based multiplicative data perturbation for privacy preserving distributed data mining. IEEE Trans. Knowl. Data Eng. (2006)
\bibitem{dwork2009differential} Dwork, C. and Lei, J.: Differential privacy and robust statistics. In: STOC. (2009)
\bibitem{chaudhuri2011differentially} Chaudhuri, K. and Monteleoni, C. and Sarwate, A.: Differentially private empirical risk minimization. J. Mach. Learn. Res. (2011)
\bibitem{aggarwal2008privacy} Aggarwal, C. and Philip, S.: Privacy-preserving data mining: a survey. Handbook of database security. (2008)
\bibitem{bengio2009curriculum} Bengio, Y. and Louradour, J. and Collobert, R. and Weston, J.: Curriculum learning. In: ICML. (2009)
\bibitem{kumar2010self} Kumar, P. and Packer, B. and Koller, D.: Self-paced learning for latent variable models. In: NIPS. (2010)
\bibitem{chen2015webly} Chen, X. and Gupta, A.: Webly supervised learning of convolutional networks. In: ICCV. (2015)
\bibitem{han2018progressive} Han, B. and Tsang, I. and Chen, L. and Celina, P. and Fung, S.: Progressive Stochastic Learning for Noisy Labels. IEEE Trans. Neural Netw. Learn. Syst. (2018)
\bibitem{li2008knows} Li, L. and Littman, M. and Walsh, T.: Knows what it knows: a framework for self-aware learning. In: ICML. (2008)
\bibitem{jain2012differentially} Jain, P. and Kothari, P. and Thakurta, A.: Differentially private online learning. In: COLT. (2012)
\bibitem{wu2017bolt} Wu, X. and Li, F. and Kumar, A. and Chaudhuri, K. and Jha, S. and Naughton, J.: Bolt-on differential privacy for scalable stochastic gradient descent-based analytics. In: SIGMOD. (2017)
\bibitem{hu2015differential} Hu, X. and Yuan, M. and Yao, J. and Deng, Y. and Chen, L. and Yang, Q. and Guan, H. and Zeng, J.: Differential privacy in telco big data platform. In: VLDB. (2015)
\bibitem{dwork2006} Dwork, C.: Differential privacy. Auto., Lang. and Prog. (2006)
\bibitem{duchi2018minimax} Duchi, J. and Jordan, M. and Wainwright, M.: Minimax optimal procedures for locally private estimation. J. Am. Stat. Assoc. (2018)
\bibitem{duchi2013local-arXiv} Duchi, J. and Jordan, M. and Wainwright, M.: Local privacy, data processing inequalities, and statistical minimax rates. arXiv:1302.3203. (2013)
\bibitem{natarajan2013learning} Natarajan, N. and Dhillon, I. and Ravikumar, P. and Tewari, A.: Learning with noisy labels. In: NIPS. (2013)
\bibitem{shalev2014understanding} Shalev-Shwartz, S. and Ben-David, S.: Understanding machine learning: From theory to algorithms. Cambridge university press. (2014)
\bibitem{kairouz2016discrete} Kairouz, P. and Bonawitz, K. and Ramage, D.: Discrete distribution estimation under local privacy. In: ICML. (2016)
\bibitem{dwork2014algorithmic} Dwork, C. and Roth, A.: The algorithmic foundations of differential privacy. Foundations and Trends{\textregistered} in Theoretical Computer Science (2014)
\bibitem{nesterov2007gradient} Nesterov, Y.: Gradient methods for minimizing composite objective function. (2007)
\bibitem{o2015adaptive} O’donoghue, B. and Candes, E.: Adaptive restart for accelerated gradient schemes. Found. Comput. Math. (2015)
\bibitem{agarwal2010information} Agarwal, A. and Bartlett, P. and Ravikumar, P. and Wainwright, M.: Information-theoretic lower bounds on the oracle complexity of stochastic convex optimization. IEEE Trans. Inf. Theory (2010)
\bibitem{valiant1984theory} Valiant, L.: A theory of the learnable. Commun. ACM (1984)
\bibitem{littlestone1988learning} Littlestone, N.: Learning quickly when irrelevant attributes abound: A new linear-threshold algorithm. Mach. Learn. (1988)
\bibitem{zhang2016privtree} Zhang, J. and Xiao, X. and Xie, X.: Privtree: A differentially private algorithm for hierarchical decompositions. In: SIGMOD. (2016)
\bibitem{han2016convergence} Han, B. and Tsang, I. and Chen, L.: On the convergence of a family of robust losses for stochastic gradient descent. In: ECML-PKDD. (2016)
\bibitem{collobert2006trading} Collobert, R. and Sinz, F. and Weston, J. and Bottou, L.: Trading convexity for scalability. In: ICML. (2006)
\bibitem{chang2011libsvm} Chang, C. and Lin, C.: LIBSVM: a library for support vector machines. ACM Trans. Intel. Syst. Tec. (2011)
\bibitem{yuan2010personalized} Yuan, M. and Chen, L. and Yu, P.: Personalized privacy protection in social networks. In: VLDB. (2010)
\bibitem{li2014efficient} Li, M. and Zhang, T. and Chen, Y. and Smola, A.: Efficient mini-batch training for stochastic optimization. In: KDD. (2014)
\bibitem{liu2017iterative} Liu, W. and Dai, B. and Rehg, J. and Song, L.: Iterative machine teaching. In: ICML. (2017)
\end{thebibliography}
\end{document}